\documentclass[11pt,a4paper]{article}
\usepackage{algcompatible,algpseudocode,a4wide}
\usepackage{amsmath,amsthm,amssymb,amsfonts,mathdots,authblk,algorithm}
\usepackage{graphicx,url}
\usepackage{tikz}
\title{The RAM equivalent of P vs.\ RP}
\author{Michael Brand\\
\texttt{michael.brand@alumni.weizmann.ac.il}}
\affil{Faculty of IT, Monash University\\
Clayton, VIC 3800\\
Australia}
\date{\today}

\makeatletter
\newcommand*{\clr}{%
  \nonscript\mskip-\medmuskip\mkern5mu%
  \mathbin{\mathsf{clr}}\penalty900\mkern5mu%
  \nonscript\mskip-\medmuskip
}
\makeatother

\makeatletter
\newcommand*{\modop}{%
  \nonscript\mskip-\medmuskip\mkern5mu%
  \mathbin{\mathsf{mod}}\penalty900\mkern5mu%
  \nonscript\mskip-\medmuskip
}
\makeatother

\newcommand{\set}{\mathop{\textit{SET}}}
\newcommand{\BigO}{\mathop{\text{O}}}

\newcommand{\Th}{\mathop{\Theta}}

\newcommand{\defeq}{\stackrel{\text{def}}{=}}
\newcommand{\minus}{\mathop{\tikz{
\coordinate (P1) at (0,0.1);
\coordinate (P2) at (0.2,0.1);
\coordinate (P3) at (0.1,0.18);
\coordinate (P4) at (0,0);
\coordinate (P5) at (0.2,0.2);
\draw[fill=white, color=white] (P4) rectangle (P5);
\draw[-] (P1) -- (P2);
\fill (P3) circle(1pt);
}}}

\newtheorem{thm}{Theorem}

\newtheorem{lem}{Lemma}
\newtheorem{cor}{Corollary}[thm]

\theoremstyle{definition}
\newtheorem{defi}{Definition}

\begin{document}
\maketitle

\begin{abstract}
One of the fundamental open questions in computational complexity is whether
the class of problems solvable by use of stochasticity under the Random
Polynomial time (RP) model is larger than the class of those solvable in
deterministic polynomial time (P). However, this question is only open for
Turing Machines, not for Random Access Machines (RAMs).

Simon (1981) was able to show that for a sufficiently equipped Random Access
Machine, the ability to switch states nondeterministically does not entail any
computational advantage. However, in the same paper, Simon describes a
different (and arguably more natural) scenario for stochasticity under the RAM
model. According to Simon's proposal, instead
of receiving a new random bit at each execution step, the RAM program is able to
execute the pseudofunction $\textit{RAND}(y)$, which returns a uniformly
distributed random integer in the range $[0,y)$. Whether the ability to
allot a
random integer in this fashion is more powerful than the ability to allot a
random bit remained an open question for the last $30$ years.

In this paper, we close Simon's open problem, by fully characterising the class
of languages recognisable in polynomial time by each of the RAMs regarding
which the question was posed. We show that for some of these, stochasticity
entails no advantage, but, more interestingly, we show that for others it
does.
\end{abstract}

\section{Introduction}
The Turing machine (TM), first introduced in \cite{Turing:Computable}, is
undoubtedly the most familiar computational model. However, for algorithm
analysis it often fails to adequately represent real-life complexities, for
which reason the random access machine (RAM),
closely resembling the intuitive
notion of an idealised computer, has become the common choice in
algorithm design. Ben-Amram and Galil \cite{Galil:Shift} write
``The RAM is intended to model what
we are used to in conventional programming, idealized in order to be better
accessible for theoretical study.''

Here, ``what we are used to in conventional programming'' refers, among other
things, to the ability to manipulate high-level objects
by basic commands. However, this ability comes with some
unexpected side effects. For example, it was shown regarding many RAMs
working with fairly limited instruction sets that they are able to
recognise any PSPACE problem in deterministic polynomial time
\cite{Schonhage:rams, Simon:Multiplication, Bertoni:pTime_RAM, Pratt:VMs}.
A unit-cost RAM equipped only with arithmetic operations, Boolean operations and
bit shifts can, in fact, recognise in constant time any language that is
recognised by a TM in time and/or space constrained by \emph{any} elementary
function of the input size \cite{Brand:ALNs}. In polynomial time, such
a RAM can recognise a class of languages which we denote PEL and define below.

However, just like $\text{PSPACE}=\text{NPSPACE}$, and for basically the same
reasons, it was shown that nondeterminism does not make any of these RAM models
more powerful \cite{Pratt:VMs, Simon:feasible, Simon:division2}.

The class
of problems that can be solved using \emph{stochastic} computation in the RP
model
is an intermediate class between deterministic computation and nondeterministic
computation. As such, the results that nondeterminism confers no advantage
apply also to stochasticity.

The basic idea is that any machine that spans all of PSPACE can enumerate over
an exponential number of possibilities (denoted by a polynomial number of bits),
and can therefore simulate any assignment of a polynomial number of stochastic
or nondeterministic bits.

Simon \cite{Simon:division2}, however, suggested a different approach to the
definition of stochasticity. Because RAMs work natively with nonnegative
integers, and because they derive their power from this ability to manipulate
in constant time large operands, it seems unnatural that stochasticity in
them will be limited to bits alone. Instead, Simon suggested to equip the RAM
with a pseudofunction, $\textit{RAND}(y)$, whose output is an integer random
variable, $x$, uniformly distributed in the range $0\le x<y$.

The question of whether polynomial time computation using this pseudofunction,
and using the acceptance criteria of RP (no input is falsely accepted,
and any input that is in the language has at most probability $1/2$ of being
falsely rejected), a model that we refer to as RP-RAM, is a more powerful
model than the deterministic PTIME-RAM, remained open for the past $30$ years,
and is the RAM equivalent of the Turing machine question
``$P\stackrel{?}{=}RP$''.

This paper analyses specific examples of RAMs whose native operation set is a
subset of the arithmetic, bitwise Boolean and bit-shift operations, and shows
that for
some of these stochasticity is advantageous and for others not. The RAMs
considered are those analysed by Simon \cite{Simon:division2} as well as by
Simon and Szegedy \cite{Simon:RAM_w_various}. To the best of
the author's knowledge, the examples given here where stochasticity confers
an advantage are the first known examples for natural, general-purpose
computational models in which
stochastic computation under RP acceptance criteria is provably stronger than
deterministic polynomial time computation.
(The results most comparable to it, in this respect, are those of
Heller \cite{Heller:relativized}, in the context of comparing relativised
polynomial hierarchies with relativised RP hierarchies.)

In order to present our results in greater detail, we first redefine, briefly,
the RAM model.
(See \cite{Aho:Algorithms} for a more formal introduction.)

A Random Access Machine, denoted $\text{RAM}[\textit{op}]$,
is a computation model that affords all that we expect from a modern computer
in terms of flow control (loops, conditional jump instructions, etc.) and
access to variables (direct and indirect addressing). The operations it can
perform are those belonging to the set $\textit{op}$, and these are assumed
to execute in a single unit of time each. A comparator for equality is also
assumed to be available, and this also executes in a single unit of
time. The variables (or \emph{registers})
of a RAM contain nonnegative integers and are also indexable by addresses that
are nonnegative integers. 

Because the result of RAM operations must be storable in a register, and
therefore be a nonnegative integer, operations such as subtraction (which may
lead to a negative result) cannot be supported directly. Instead, we use
\emph{natural subtraction}\footnote{Natural subtraction will be taken here to
share the same
properties as normal subtraction in terms of operator precedence, etc..},
denoted ``$\minus$'' and defined as
\[
a\minus b\defeq \max(a-b,0).
\]

By the same token, regular bitwise negation is not allowed, and
$\lnot a$ is tweaked to mean that the bits of $a$ are negated only up to and
including its most significant ``$1$'' bit\footnote{We define $\lnot 0$ to be
zero.}. We use $X \clr Y$ to denote
what would have been $X \land \lnot Y$ if the standard ``$\lnot$'' operation
had been available.

Furthermore, we assume, following e.g.\ \cite{Mansour:floor1}, that all explicit
constants used as operands in RAM programs belong to the set $\{0,1\}$.
This assumption does not make a material difference to the results, but
it simplifies the presentation.

In order for our RAMs to be directly comparable to TMs, we consider only RAMs
that take a single integer as input. This is the value of some pre-designated
register (e.g.\ $R[0]$) at program start-up. All other registers are
initialised to zero.
A RAM is considered to ``accept'' the input if the RAM program ultimately
terminates, and the final value of a pre-designated output register
(e.g.\ $R[0]$) is
nonzero.

The flavour of Turing machine most directly comparable to this design is one
that works on a one-sided-infinite tape over a binary
alphabet, where ``$0$'' doubles as the blank. All TMs used in this paper are
of this type.

To introduce our results, we first define the function class PEL and the
eponymous complexity class.

\begin{defi}[PEL]
PEL (standing for ``Polynomial Expansion Limit'')
is the class of functions that can be described by
\[
f(n)={}^{p(n)}2,
\]
where $p$ can be any polynomial and the
left superscript denotes tetration.

The complexity class PEL is the class of languages recognisable by a TM
in time $f(n)$, where $f(n)$ is in PEL and $n$ is the bit-length of the
input.  Equivalently, PEL is the class of languages recognisable by a TM
working on a tape of size $f(n)$, where $f(n)$ is in PEL.

PEL can therefore also be described as either PEL-TIME or PEL-SPACE.
\end{defi}

The equivalence of the time-constrained and the tape-constrained definitions
is given by the well-known relation
$\text{$f$-TIME}\subseteq\text{$f$-SPACE}\subseteq\text{$\exp(f)$-TIME}$.

Our results are as follows.

\begin{thm}\label{T:RPRAM}
\[
\text{RP-RAM}[+,[\minus],[\times],[\div],\leftarrow,[\rightarrow],\textit{Bool\/}]
=\text{PEL}.
\]
\end{thm}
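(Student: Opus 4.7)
The plan is to prove the two inclusions $\text{PEL}\subseteq \text{RP-RAM}[\ldots]$ and $\text{RP-RAM}[\ldots]\subseteq \text{PEL}$ separately; the former is inherited trivially from the corresponding deterministic characterisation, and the latter is where the real work lies.

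For the lower bound, I would appeal to the fact, already discussed in the introduction, that a deterministic PTIME-RAM equipped with the arithmetic, Boolean and bit-shift operations in question recognises exactly PEL (per \cite{Brand:ALNs}). Any such deterministic machine is a degenerate RP-RAM whose acceptance probability on every input is either $0$ or $1$, and hence satisfies the RP promise vacuously, so $\text{PEL}=\text{PTIME-RAM}[\ldots]\subseteq\text{RP-RAM}[\ldots]$. This also shows the lower bound is insensitive to which of the bracketed operations happen to be present, since the cited result holds for the weakest such subset.

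For the upper bound, my first step would be a register-size lemma. Let $T=p(n)$ bound the running time of the RP-RAM program on inputs of bit-length $n$. By induction on the step index $t\le T$, I would show that every value stored in any register after step $t$ has bit-length at most ${}^{q(t)}2$ for some fixed polynomial $q$. The worst offender is the left shift $a\leftarrow b$, which maps operands of bit-length $B$ to a result of bit-length up to $B+2^B$; chaining this through $T$ steps yields precisely the tetrational envelope that defines PEL. Multiplication contributes only a doubling per step, and the $\textit{RAND}(y)$ output, being bounded by $y$, is absorbed into the same bound. Next, I would exhibit a deterministic TM that, on input $x$, maintains a tape representation of the RAM's register state and simulates its instructions one by one using schoolbook algorithms; each operation on PEL-length operands takes PEL time, since PEL is closed under polynomial composition and under $f\mapsto 2^f$. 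Stochasticity would be handled by exhaustive enumeration over the computation tree: at each $\textit{RAND}(y)$ call the simulator branches $y$ ways and weights each branch by $1/y$, then aggregates the weighted acceptance indicator at the leaves. The total number of leaves is at most the product of the $y$-values and hence of bit-length still within PEL, so the aggregate mass can be computed exactly in PEL time and thresholded against $1/2$; correctness follows from the RP promise.

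The main technical obstacle is the register-size induction: one must verify that repeated left shifts whose shift amount has itself been produced by earlier left shifts do not burst through the tetrational bound in $T$ steps, and that all other operations---including $\div$, $\minus$, $\clr$, and the tweaked $\lnot$---respect the same envelope. Once this bound is in place, the remainder is essentially bookkeeping, because every quantity that appears in the simulation (operand length, tree size, enumeration time, probability denominators) sits comfortably inside PEL by the closure properties cited above. Neither the randomness nor the RP promise contributes any additional difficulty, which is precisely the content of the theorem: for this rich an instruction set, the deterministic PTIME-RAM is already strong enough to absorb stochasticity by brute-force enumeration.
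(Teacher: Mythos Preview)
Your upper bound $\text{RP-RAM}[\ldots]\subseteq\text{PEL}$ is essentially right and close to the paper's argument (the paper routes through a randomised PEL-SPACE TM and Savitch's theorem rather than explicit enumeration, but your direct enumeration also works because PEL-TIME $=$ PEL-SPACE).

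The lower bound, however, has a genuine gap that inverts the whole difficulty of the theorem. You write that ``a deterministic PTIME-RAM equipped with the arithmetic, Boolean and bit-shift operations in question recognises exactly PEL'' and that ``the cited result holds for the weakest such subset.'' This is false. The result of \cite{Brand:ALNs} gives $\text{P-RAM}=\text{PEL}$ only when \emph{division} (exact ``$/$'' or integer ``$\div$'') is available; for the division-free operation set $\{+,[\minus],[\times],\leftarrow,[\rightarrow],\textit{Bool}\}$ the deterministic class is exactly $\text{PSPACE}$, not PEL (this is the paper's Theorem~3). Since ``$\div$'' is bracketed and hence optional in the statement you are proving, your inheritance argument fails precisely for the interesting instances---and indeed Corollary~\ref{C:wo_div} says that in those instances $\text{P-RAM}\neq\text{RP-RAM}$, so no purely deterministic inclusion can possibly give you $\text{PEL}\subseteq\text{RP-RAM}$.

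The paper's actual work is therefore almost entirely on the side you dismissed as trivial. It introduces the intermediate model BRP-RAM (with $\textit{RAND}(2^k)$ only) and builds an explicit $\text{BRP-RAM}[+,\leftarrow,\textit{Bool}]$ algorithm that simulates an arbitrary PEL-space TM: it uses $\textit{RAND}(2^k)$ for very large $k$ to sample, in parallel, tetrationally many candidate tableaux encoded inside a single integer, then verifies them all at once with Boolean/shift tricks. A separate step shows that the general $\textit{RAND}(y)$ reduces to $\textit{RAND}(2^k)$. None of this machinery appears in your proposal, and something of this kind is unavoidable: you must exhibit a computation that a division-free deterministic RAM provably cannot do, but that becomes possible once $\textit{RAND}$ is added.
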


In the above theorem, operations that appear in brackets inside the operation
list are optional, in the sense that the theorem continues to be true both
with and without them. The operation ``$\div$'' is integer division,
``$\leftarrow$'' is left shifting ($a\leftarrow b \defeq a \times 2^b$),
``$\rightarrow$'' is right shifting
($a\rightarrow b \defeq \lfloor a/2^b \rfloor$) and ``$\textit{Bool\/}$'' is
shorthand for the set of all bitwise Boolean operations.\footnote{For
assignment, we use ``$\Leftarrow$'', so as to disambiguate it from
left shifting.}

Two direct corollaries from Theorem~\ref{T:RPRAM} are

\begin{cor}\label{C:wo_div}
\[
\text{RP-RAM}[+,[\minus],[\times],\leftarrow,[\rightarrow],\textit{Bool\/}]
\neq\text{P-RAM}[+,[\minus],[\times],\leftarrow,[\rightarrow],\textit{Bool\/}].
\]
\end{cor}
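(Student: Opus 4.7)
The plan is to combine Theorem~\ref{T:RPRAM} with the best-known deterministic upper bound for the same division-free instruction set, and to invoke the time hierarchy theorem to make the separation unconditional.

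First, Theorem~\ref{T:RPRAM} lists $\div$ as one of its optional operations, so instantiating it with $\div$ absent (and matching the other optional operations to whatever choice is made in the corollary, which is permissible because the theorem holds for every such choice) immediately yields
\[
\text{RP-RAM}[+,[\minus],[\times],\leftarrow,[\rightarrow],\textit{Bool\/}] = \text{PEL}.
\]
The corollary therefore reduces to showing that the corresponding deterministic P-RAM, with the very same division-free instruction set, recognises strictly fewer languages than PEL.

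For this I would import the classification results of Simon~\cite{Simon:division2} and Simon--Szegedy~\cite{Simon:RAM_w_various}, which are explicitly flagged in the introduction as the prior characterisation work for the RAMs considered here. Those papers confine the polynomial-time power of such a division-free deterministic RAM to a class well inside PSPACE. Since PEL is defined via tetrationally growing time bounds, the time hierarchy theorem gives unconditionally $\text{PSPACE}\subseteq\text{EXP}\subsetneq\text{2-EXP}\subseteq\text{PEL}$, so the deterministic class is strictly contained in PEL without relying on any open separation between standard complexity classes. Chaining this strict inclusion with the PEL-equality above produces the inequality claimed.

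The main obstacle, such as it is, does not lie in any new construction --- the real work is already encapsulated in Theorem~\ref{T:RPRAM}, which is precisely why the authors call the statement a \emph{direct} corollary. What does require some care is the bookkeeping: one must check that the deterministic upper bound being cited applies to exactly the instruction set written in the corollary, and in particular that the presence of left-shift (which alone can generate exponentially long integers in polynomial time) does not by itself push the deterministic class up to PEL. That verification is handled by the instruction-by-instruction analysis in Simon and Simon--Szegedy, so once Theorem~\ref{T:RPRAM} is in hand no further computation is needed.
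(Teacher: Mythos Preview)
Your argument is correct and mirrors the paper's own derivation: Theorem~\ref{T:RPRAM} gives $\text{RP-RAM}=\text{PEL}$ for this instruction set, the cited prior work pins the deterministic class at $\text{PSPACE}$, and a hierarchy theorem separates $\text{PSPACE}$ from $\text{PEL}$. Two small corrections: the paper states (Theorem~3, citing \cite{Brand:ALNs,Simon:RAM_w_various}) that the deterministic class equals $\text{PSPACE}$ exactly, not a class strictly inside it; and the paper invokes the \emph{space} hierarchy theorem (via \cite{Stearns:hierarchies,Geffert:space_hierarchy,Seiferas:space}) rather than your time-hierarchy route through $\text{EXP}\subsetneq\text{2-EXP}$, though either suffices.
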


\begin{cor}\label{C:w_div}
\[
\text{RP-RAM}[+,[\minus],[\times],/,[\div],\leftarrow,[\rightarrow],\textit{Bool\/}]
=\text{P-RAM}[+,[\minus],[\times],/,[\div],\leftarrow,[\rightarrow],\textit{Bool\/}].
\]
\end{cor}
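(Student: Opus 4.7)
The plan is to sandwich both sides of the equality against PEL. The inclusion $\text{P-RAM} \subseteq \text{RP-RAM}$ is immediate, since any deterministic procedure trivially satisfies the RP acceptance criteria; hence only the reverse inclusion requires real work.

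For $\text{RP-RAM} \subseteq \text{PEL}$, I would adapt the proof of Theorem~\ref{T:RPRAM} to accommodate the additional operation $/$. Since $/$ is a divisive primitive, its output cannot exceed the bit-length of its inputs, so the register-length bookkeeping that underlies the PEL-level simulation of Theorem~\ref{T:RPRAM} should continue to apply with only cosmetic modification. For $\text{PEL} \subseteq \text{P-RAM}$, I would invoke the classical constructions cited in the introduction (\cite{Schonhage:rams, Simon:Multiplication, Bertoni:pTime_RAM, Pratt:VMs}), which show that a deterministic RAM equipped with a sufficiently rich operation set, including a suitable form of division, can simulate in polynomial time any TM computation whose resources are PEL-bounded. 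Chaining the three inclusions, $\text{RP-RAM} \subseteq \text{PEL} \subseteq \text{P-RAM} \subseteq \text{RP-RAM}$, closes the cycle and yields the desired equality.

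The main obstacle is the classical deterministic side: one must verify that the precise operation set of the corollary, in particular the combination of $/$ with $\div$, suffices to reach all of PEL deterministically. The cited simulations are typically stated with somewhat different primitive sets, so the task is to check that those primitives can be emulated using the operations listed in the corollary, or, failing that, to re-derive the PEL lower bound on $\text{P-RAM}$ directly. Once this is secured, Corollary~\ref{C:w_div} reads as a direct consequence of Theorem~\ref{T:RPRAM} together with the classical deterministic results, and the contrast with Corollary~\ref{C:wo_div} is explained by the fact that $/$ is precisely the ingredient whose removal prevents the deterministic machine from matching its stochastic counterpart.
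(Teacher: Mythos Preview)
Your sandwich-via-PEL approach is exactly how the paper derives the corollary, but two points need tightening. First, the references you cite for $\text{PEL}\subseteq\text{P-RAM}$ (\cite{Schonhage:rams, Simon:Multiplication, Bertoni:pTime_RAM, Pratt:VMs}) establish only that such RAMs capture \emph{PSPACE}; the jump to PEL requires the stronger result stated in the paper as Theorem~2 (from \cite{Brand:ALNs}), which is formulated for precisely the operation set of the corollary, so your worry about mismatched primitive sets dissolves once you cite the right source. Second, no adaptation of Theorem~\ref{T:RPRAM} is needed to accommodate ``$/$'': exact division is defined only when it agrees with integer division, so any RP-RAM using ``$/$'' is simulated step-for-step by one using ``$\div$'', and Theorem~\ref{T:RPRAM} already includes ``$\div$'' as an optional operation, giving $\text{RP-RAM}[\ldots,/,\ldots]\subseteq\text{RP-RAM}[\ldots,\div,\ldots]=\text{PEL}$ immediately. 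With these two corrections the chain $\text{P-RAM}\subseteq\text{RP-RAM}\subseteq\text{PEL}=\text{P-RAM}$ closes exactly as in the paper.
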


Here, ``$/$'', known as \emph{exact division}, is a weaker form of division.
The result of $a/b$ is the same as that of integer division (``$a\div b$''),
but $a/b$ is only defined when $a$ is a multiple of $b$.

These corollaries can be derived from Theorem~\ref{T:RPRAM} by making use of
the following known facts.

\begin{thm}[\cite{Brand:ALNs}]
\[
\text{P-RAM}[+,[\minus],[\times],/,[\div],\leftarrow,[\rightarrow],\textit{Bool\/}]=\text{PEL}
\]
\end{thm}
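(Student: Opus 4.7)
Since this theorem is cited from \cite{Brand:ALNs}, I sketch the natural proof strategy. The equality is established by the two containments.

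For the upper bound $\text{P-RAM}[\ldots]\subseteq\text{PEL}$, my approach would be to bound the bit-length of register contents over the course of a polynomial-time computation. Let $S_t$ denote the largest bit-length of any register after $t$ operations. Starting from an input of length $n$, the arithmetic and Boolean operations at most double operand sizes, while the left shift $a\leftarrow b$ produces a number of bit-length $|a|+\mathrm{val}(b)\leq|a|+2^{|b|}$. This gives a recurrence of the form $S_{t+1}\leq 2^{O(S_t)}$, so after $p(n)$ steps, $S_{p(n)}\leq{}^{O(p(n))}2\in\text{PEL}$. Since only polynomially many registers are touched, the entire RAM configuration has PEL bit-length and can be simulated step-by-step by a Turing machine in PEL-SPACE, which equals PEL.

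For the reverse inclusion $\text{PEL}\subseteq\text{P-RAM}[\ldots]$, I would use the space-based characterisation $\text{PEL}=\text{PEL-SPACE}$. Fix $L\in\text{PEL}$ decided by some TM $M$ in space $f(n)={}^{p(n)}2$. The first step is to show how the RAM, though polynomial-time, can construct a sufficiently large integer: iterating the assignment $x\Leftarrow 1\leftarrow x$ (i.e.\ $x\mapsto 2^x$) starting from a small value raises the tetration tower by one level per iteration, so $O(p(n))$ iterations yield a register of bit-length at least $f(n)$. This integer serves as a ``canvas'' onto which $M$'s tape alphabet, head position, and state are encoded as bit patterns, and individual RAM operations on the canvas then implement massively bit-parallel updates: shifts move data and head markers, Boolean masks isolate the cell under the head, and a short fixed gadget applies the transition table.

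The main obstacle is that $M$ may take up to $2^{O(f(n))}$ steps before halting, which far exceeds any polynomial in $n$; one cannot simulate $M$ step-by-step even with the canvas in hand. Overcoming this requires casting acceptance as reachability in $M$'s configuration graph and compressing many simulation steps into each RAM operation, e.g.\ by encoding the transition relation as a bit-matrix on the canvas and squaring it through a fixed-length RAM gadget, so that one RAM operation doubles the effective number of simulated steps. I expect this compression---achieving polynomial RAM length for a simulation whose natural depth is tetration---to be the delicate heart of the argument, and to hinge on the precise interplay of shifts, exact division, and bitwise Boolean operations required to realise the squaring (or an analogous layered reduction) at unit cost on tetration-sized operands.
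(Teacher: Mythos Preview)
The paper does not itself prove this theorem; it is quoted as a known fact from \cite{Brand:ALNs}. That said, Section~\ref{SS:known} of the paper recapitulates the key mechanism from \cite{Brand:ALNs}, so one can compare your sketch against it.

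Your upper-bound argument (bounding register bit-length by an iterated exponential and simulating on a PEL-SPACE Turing machine) is the standard one and matches what the paper implicitly relies on in Section~\ref{SS:complete}.

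For the lower bound, you propose encoding the transition relation as a bit-matrix and repeatedly squaring it so that each RAM step doubles the number of simulated TM steps. The technique actually described in the paper (following \cite{Brand:ALNs}) is different: it is \emph{tableau enumeration and parallel verification}. One bounds the run length by $B=2^{3(s+c-1)}$, observes that the set of all length-$B$ tableaux over a tape of size $s$ is finite, packs \emph{every} candidate tableau side by side into one enormous vector, and then applies the constant-time verifier $\hat{\mathcal{R}}$ (built from $\hat{\mathcal{T}}$, bitwise Booleans, and constant shifts) once to all candidates simultaneously. Acceptance is read off by checking whether any element passes as a valid-and-accepting tableau. The role of exact division here is not to realise a squaring gadget, but to \emph{construct} the all-candidates vector deterministically: quantities of the form $(2^{mN}-1)/(2^m-1)$ and related closed forms yield the repeating ``all ones'' and counter patterns needed to lay out every possible tableau, and this is precisely the step that fails when ``$/$'' is removed (giving only PSPACE, per Theorem~3). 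So you have correctly located the crux---collapsing a tower-exponential simulation depth into polynomially many RAM steps---but the compression is achieved by exhaustive parallel enumeration rather than by iterated reachability-matrix squaring.
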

and
\begin{thm}[\cite{Brand:ALNs,Simon:RAM_w_various}]
\[
\text{P-RAM}[+,[\minus],[\times],\leftarrow,[\rightarrow],\textit{Bool\/}]=\text{PSPACE},
\]
\end{thm}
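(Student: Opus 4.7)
The plan is to prove the equality by establishing the two inclusions separately.

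For the direction $\text{PSPACE}\subseteq\text{P-RAM}[+,\leftarrow,\textit{Bool\/}]$, I would exhibit a polynomial-time RAM algorithm for a PSPACE-complete problem such as QBF. Since after $\text{poly}(n)$ steps a single register can already hold an integer of exponential bit-length, I would store the entire $2^n$-bit truth table of each subformula of a QBF instance of size $n$ in a single register, building these truth tables bottom-up. Boolean connectives correspond directly to the bitwise operations in $\textit{Bool\/}$, while quantifying $x_i$ out of a subformula is implemented by combining its truth-table register with a copy of itself shifted by $2^{i-1}$ positions, restricted by a ``stripe'' mask that selects every other block of $2^{i-1}$ bits. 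The required shift amounts and stripe masks are themselves constructed in $O(n)$ RAM steps by repeated doubling via $\leftarrow$ combined with bitwise OR.

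For the converse $\text{P-RAM}[\ldots]\subseteq\text{PSPACE}$, I would describe a polynomial-space simulator of any polynomial-time program in the model. The immediate difficulty is that registers may hold integers with super-polynomially many bits, which rules out representing them explicitly. The approach is a recursive bit-access simulation: define a procedure $\mathrm{Bit}(t,r,i)$ that returns the $i$-th bit of register $r$ after step $t$ by inspecting the instruction executed at step $t$ and recursing on bit-queries to earlier time-steps. For $+$, $\minus$, shifts, and bitwise Boolean operations, bit $i$ of the output depends only on operand bits at or near position $i$, together with a carry computed by an auxiliary bit-serial sweep from position $0$ upward. For multiplication, bit $i$ of $a\times b$ is the low bit of $\sum_{j\le i}\mathrm{Bit}(t,a,j)\cdot\mathrm{Bit}(t,b,i-j)$ plus a running carry, again computed by a bit-serial sweep whose carry register remains of manageable width. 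Indirect addressing is handled by maintaining a sparse log of the at most $T=\text{poly}(n)$ writes performed by the program and resolving each read by bit-wise address comparison via further $\mathrm{Bit}$ queries; acceptance is then a search for a set bit in the output register, carried out over polynomially many bit-positions.

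The main obstacle I anticipate is controlling the space budget through this recursion — in particular, ensuring that every bit index visited and every carry maintained is itself only of polynomial bit-length, so that the recursion depth (at most $T$) times the per-level working space remains polynomial overall. The structural feature that makes this achievable, and that separates the present model from the PEL-capable variant of the preceding theorem, is the \emph{absence of integer division}: without $\div$, carries propagate only from low-order to high-order bits, so the bit-serial sweep is well-defined and its intermediate state stays bounded. Were $\div$ admitted, this monotonicity would disappear and, consistent with the previous theorem, the model's expressive power would jump all the way up to PEL.
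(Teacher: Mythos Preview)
First, note that the paper does not prove this theorem at all: it is quoted from \cite{Brand:ALNs,Simon:RAM_w_various} and used only to derive Corollary~\ref{C:wo_div}. So there is no ``paper's own proof'' to compare against; the question is whether your sketch stands on its own.

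Your inclusion $\text{PSPACE}\subseteq\text{P-RAM}[+,\leftarrow,\textit{Bool\/}]$ via QBF truth tables is the standard argument and is fine.

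The converse, however, has a real gap. You propose a recursion $\mathrm{Bit}(t,r,i)$ and flag as the ``main obstacle'' keeping bit indices of polynomial length, but you do not actually show this, and in fact it fails in the form you describe. Because $a\leftarrow b$ shifts by the \emph{value} of a register, after $t$ steps a register may have bit-length on the order of ${}^{t}2$; hence a bit index $i$ you might need to name can itself require ${}^{t-1}2$ bits, far beyond polynomial space. Concretely, to compute bit $i$ of $a\leftarrow b$ you must know $b$ (or at least compare it to $i$), and determining whether $b\le i$ already forces you to inspect bits of $b$ at positions that can be astronomically large; with $\rightarrow$ present the situation is worse still, since bit $i$ of $a\rightarrow b$ is bit $i+b$ of $a$. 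Your stated reason why the obstacle is surmountable --- that without~$\div$ carries propagate only upward --- is about the arithmetic of $+$ and $\times$, not about bounding the \emph{addresses} of the bits you visit, and so does not resolve the difficulty.

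The upper bounds in the cited works do not proceed by na\"ive bit extraction; they rely on representing register contents symbolically (as polynomial-size straight-line programs over the allowed operations) and arguing that the comparisons needed to drive the control flow and the final zero/nonzero test can be decided in polynomial space from those descriptions. If you want your plan to go through, that is the missing ingredient: replace ``store and manipulate the index $i$ explicitly'' by ``store and manipulate a polynomial-size symbolic description of the value, and show that the required predicates on such descriptions lie in PSPACE.''
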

given that $\text{PEL}\neq\text{PSPACE}$ is known from \cite{Stearns:hierarchies, Geffert:space_hierarchy, Seiferas:space}.

Together, Corollaries~\ref{C:wo_div} and \ref{C:w_div} show the surprising fact
that for unit-cost RAMs the answer to the P vs.\ RP question can go either way,
depending on the choice of a basic operation set: by simply adding division as
a basic operation, the answer is reversed.

The rest of this paper is dedicated to a proof of Theorem~\ref{T:RPRAM}
and some corollaries.
It is arranged as follows.

We begin, in Section~\ref{S:preliminaries}, by constructing the
basic tools used in the proof.

The main body of the proof of Theorem~\ref{T:RPRAM} consists of two parts,
which relate to a new computational model introduced here, which we call the
BRP-RAM.
This model is identical to the RP-RAM model, except that the instruction
\[
x\Leftarrow \textit{RAND}(y)
\]
is now replaced by
\[
x\Leftarrow \textit{RAND}(2^k),
\]
a pseudofunction assigning to $x$ an integer random variable uniformly
distributed in the range $0\le x< 2^k$.

The first part of the proof, handled in Section~\ref{S:BRP_PEL}, proves
\begin{lem}\label{L:BRPRAM}
\[
\text{BRP-RAM}[+,[\minus],[\times],[\div],\leftarrow,[\rightarrow],\textit{Bool\/}]
=\text{PEL}.
\]
\end{lem}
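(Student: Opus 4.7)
The plan is to prove the lemma by establishing the two containments separately. For the easier direction, $\text{BRP-RAM}[\ldots]\subseteq\text{PEL}$, I would proceed by a deterministic enumeration of random tapes. A polynomial-time BRP-RAM makes at most polynomially many calls to $\textit{RAND}(2^k)$; the exponents $k$ can themselves be large register values, but they are deterministically computable from previous register contents, and the cumulative bit-length of the consumed random tape can be bounded by an exponential in the time bound. A deterministic TM can therefore iterate over all possible random tapes, run the (now deterministic) P-RAM on each, and tally accepting runs to decide the RP acceptance threshold of $1/2$. Each individual run is handled by the already-established containment $\text{P-RAM}[+,[\minus],[\times],[\div],\leftarrow,[\rightarrow],\textit{Bool\/}]\subseteq\text{PEL}$, and the outer enumeration, which contributes at most an exponential blow-up, stays inside PEL since PEL is closed under exponentiation (indeed under tetration).

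The harder direction is $\text{PEL}\subseteq\text{BRP-RAM}[\ldots]$, where $\textit{RAND}(2^k)$ must supply the power that $\div$ contributes when it turns deterministic PSPACE into deterministic PEL. Using the cited theorem I can start from a P-RAM algorithm that uses $+,\minus,\times,/,\div,\leftarrow,\rightarrow,\textit{Bool\/}$ and runs in polynomial time, and my plan is to simulate it on the BRP-RAM step-for-step, replacing each invocation of $\div$ by a randomized subroutine that on input $(a,b)$ returns $\lfloor a/b\rfloor$ with bounded one-sided error. A natural candidate is a Newton--Raphson iteration for the scaled reciprocal $\lfloor 2^N/b\rfloor$, in which random bits supplied by $\textit{RAND}(2^k)$ provide the initial approximation at the correct scale, the bit-length of $b$ is located by a binary search using shifts, comparisons and bitwise operations, and a final deterministic check $qb\le a<(q+1)b$ (using only $+$, $\times$ and equality) certifies the candidate quotient before it is passed downstream. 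Because the iteration converges quadratically, only polynomially many iterations should be needed even when the operands have exponentially many bits.

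The principal obstacle I expect is preserving RP semantics across the entire chain of simulated operations. The outer algorithm may invoke $\div$ polynomially many times on operands whose bit-lengths grow to exponential or worse during the simulation, so a careful precision analysis of the Newton iterates, a union bound over all such calls, and a standard repetition argument will be needed to keep the aggregate failure probability below $1/2$. A secondary difficulty is ensuring that the weakest allowed BRP-RAM model (with all the bracketed operations omitted) still suffices --- in particular, showing that whatever division-like primitive and verification check the simulation requires can be assembled from $+$, $\leftarrow$, $\textit{Bool\/}$ and $\textit{RAND}(2^k)$ alone, without implicitly relying on $\minus$, $\times$, $\div$ or $\rightarrow$ unless they are available in the operation set under consideration.
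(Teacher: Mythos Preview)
Your plan for the containment $\text{BRP-RAM}\subseteq\text{PEL}$ is workable and close in spirit to the paper's argument (the paper routes through randomised PEL-SPACE and Savitch's theorem rather than explicit enumeration of random tapes, but the two are interchangeable here). Your description of the random-tape length as ``exponential in the time bound'' is an understatement---since $\leftarrow$ is available, register contents, and hence the exponents $k$ appearing in calls to $\textit{RAND}(2^k)$, can themselves be of PEL magnitude---but this still lies inside PEL, so the conclusion survives.

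The plan for $\text{PEL}\subseteq\text{BRP-RAM}$, however, does not go through. The fatal issue is the iteration count of Newton--Raphson. You correctly note that quadratic convergence needs only $\Theta(\log N)$ iterations to obtain $N$ correct bits of the reciprocal, and you assert that this is polynomial ``even when the operands have exponentially many bits''. But the operands of the division-equipped P-RAM you propose to simulate do not have merely exponentially many bits: with $\leftarrow$ available, after $t$ steps register bit-lengths can reach ${}^{\Theta(t)}2$, so after $p(n)$ steps they are of tetration size ${}^{\Theta(p(n))}2$. For such $N$ one has $\log N={}^{\Theta(p(n))-1}2$, which is nowhere near polynomial in $n$. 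A single simulated division already destroys the time budget. The same obstruction defeats any scheme based on iterative refinement of a candidate quotient (including binary search), since the number of refinement steps is tied to the bit-length of the answer. Your ``secondary difficulty''---that both the Newton iteration $x\mapsto x(2-bx)$ and the certificate check $qb\le a<(q+1)b$ require $\times$, which is an optional operation---is in fact a second, independent obstacle for the minimal operation set $\{+,\leftarrow,\textit{Bool\/}\}$.

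The paper's route for this direction is entirely different: it never simulates a division-equipped P-RAM. Instead it simulates the PEL Turing machine directly, using a single call to $\textit{RAND}(2^k)$ to produce an integer whose bits encode a huge collection of candidate computation tableaux, and then verifies all candidates simultaneously in $O(1)$ time via bitwise Boolean operations and constant-amount shifts. The real technical work is not replacing $\div$ but rather constructing, in $O(\textit{maxstep})$ time and without multiplication, a deterministically $\hat w$-sparse ``index set'' $\hat I$ so that the random bits can be carved into non-overlapping candidate slots; this is done by an iterative dilution procedure in which fresh random strings are interpreted as counters and checked for correct incrementation. Nothing in your proposal corresponds to this mechanism, and I do not see a way to repair the Newton--Raphson plan without essentially reinventing it.
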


The second part, handled in Section~\ref{S:RP_BRP}, then completes
the proof by establishing
\begin{lem}\label{L:generic_RAND}
\[
\text{RP-RAM}[+,[\minus],[\times],[\div],\leftarrow,[\rightarrow],\textit{Bool\/}]
=
\text{BRP-RAM}[+,[\minus],[\times],[\div],\leftarrow,[\rightarrow],\textit{Bool\/}].
\]
\end{lem}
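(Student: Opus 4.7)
The inclusion BRP-RAM$[\mathrm{ops}]\subseteq$~RP-RAM$[\mathrm{ops}]$ is immediate: $2^k$ is obtainable as $1\leftarrow k$, so every call $x\Leftarrow\textit{RAND}(2^k)$ is a legitimate $x\Leftarrow\textit{RAND}(y)$. The content of the lemma lies in the reverse direction, where the plan is to replace each $x\Leftarrow\textit{RAND}(y)$ appearing in an RP-RAM$[\mathrm{ops}]$ program by a BRP-RAM$[\mathrm{ops}]$ subroutine that samples uniformly on $[0,y)$ from RAND$(2^k)$-primitives only, and to show that the replacement preserves the RP acceptance guarantee.

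The natural subroutine is bounded rejection sampling. Take $k$ to be the least integer with $2^k\ge y$; then $2^{k-1}<y$, so a draw $r\Leftarrow\textit{RAND}(2^k)$ satisfies $r<y$ with probability exceeding $1/2$. The subroutine makes up to $L$ such draws, returning the first one below $y$ and reporting ``budget exhausted'' if none qualifies. A single call exhausts its budget with probability below $2^{-L}$, so if $T(n)$ upper-bounds the running time (hence the number of RAND calls), setting $L=\lceil\log_2(4T(n))\rceil$ keeps the union-bound probability of any exhaustion below $1/4$. Conditional on no exhaustion anywhere, each returned sample is exactly uniform on its intended range, so the simulation is distributionally identical to the RP-RAM.

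The asymmetric RP acceptance criterion is handled in two strokes. First, pre-amplify the given RP-RAM by running it a constant number of times and taking the logical OR of its outputs, boosting the yes-instance accept probability to at least $3/4$ while preserving no-false-positives. Second, hard-wire the outer simulation to reject the moment any RAND subroutine reports exhaustion. On a no-instance the amplified machine never accepts, so forced rejections only add rejects and the simulation accepts with probability $0$. On a yes-instance, Pr[sim accept]$=$Pr[amplified accept, no exhaustion]$\ge 3/4-1/4=1/2$, clearing the RP threshold.

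What remains is to verify that the rejection-sampling subroutine --- computing $\lceil\log_2 y\rceil$, forming $2^k$, and deciding $r<y$ --- is implementable within BRP-RAM$[\mathrm{ops}]$ for \emph{each} operation set allowed by the lemma, including the most restrictive one retaining only $+$, $\leftarrow$, and $\textit{Bool}$. I expect this to be the main technical hurdle: magnitude comparison without $\minus$, $\times$, $\div$, or $\rightarrow$ must be coded from Boolean operations on $r$ and $y$ combined with iterated left shifts, with polynomial overhead even when $y$ carries PEL-sized bit-lengths. The cleanest resolution is to invoke Lemma~\ref{L:BRPRAM}, proved in the preceding section, which gives BRP-RAM$[\mathrm{ops}]=\mathrm{PEL}$ for each ops set in question; polynomial-time comparison and bit-length computation lie comfortably inside PEL and can therefore serve as BRP-RAM$[\mathrm{ops}]$ subroutines, closing the simulation.
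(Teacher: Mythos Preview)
Your rejection-sampling route differs from the paper's but is sound once two points are tightened. The paper does not sample call-by-call: it wraps the simulation in an outer $\textit{maxstep}$-doubling loop, collates \emph{all} $\textit{RAND}(Y_i)$ calls of one bounded run into a single $\textit{RAND}(2^{\tilde k})$ with $2^{\tilde k}$ chosen a priori to exceed $\prod_i Y_i$, extracts the individual samples by successive $\div$ and $\modop$, shows the resulting acceptance probability on yes-instances is at least $p/(2-p)$ when the RP-RAM's is $p$, and recovers $p$ by OR-ing three independent runs. Both arguments invoke Lemma~\ref{L:BRPRAM} to justify working with the full operation set; your per-call scheme is the more elementary of the two, and the pre-amplify-then-force-reject accounting is clean.

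The two loose ends: first, you treat the time bound $T(n)$ as known when fixing the retry budget $L$, but the simulating machine does not know the RP-RAM's polynomial; you need the same $\textit{maxstep}$-doubling wrapper the paper uses. Second, your appeal to Lemma~\ref{L:BRPRAM} for ``bit-length computation'' is shakier than it looks. That lemma equates \emph{language} classes and does not by itself deliver a poly$(n)$-time routine returning $\lceil\log_2 y\rceil$ for a register value $y$ whose own bit-length may already be ${}^{\Theta(T(n))}2$; naive search or bisection for the exponent costs a tower of iterations. The paper never faces this because its $\tilde k$ is \emph{constructed} bottom-up from $\textit{maxstep}$ by iterated left-shifts rather than \emph{extracted} from any $Y_i$. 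You can repair your version the same way: use such a constructed $K\ge\log_2 y$, draw $r\Leftarrow\textit{RAND}(2^K)$, and accept $r\bmod y$ whenever $r<(2^{K}\div y)\cdot y$; the tail probability is then below $y/2^K$, and your amplification analysis carries over unchanged.
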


After completing this main proof, we turn in Section~\ref{S:input} to reuse
the machinery developed in order to sharpen these results, characterising both
the power of RAMs working in any specific time complexity and the power of RAMs
working under BPP acceptance criteria.

A short conclusions section follows.

\section{Preliminaries}\label{S:preliminaries}

We begin by describing the basic tools used in the construction.

\subsection{Some redundant operations}\label{SS:redundant}

In describing the algorithms in this paper we use
``$\minus$'' and ``$\rightarrow$'', as well as all comparators, freely, even
when the only non-optional
operations available are ``$+$'', ``$\leftarrow$'' and ``$\textit{Bool\/}$''
and the only comparator available is ``$=$'', testing for equality.
In this section we justify this presentation, by showing that
``$\minus$'', ``$\rightarrow$'' and
all comparators can be simulated given the available operations.

To see this, note first that had these been RAMs working on general integers
(not necessarily
non-negative), and had numbers been stored in registers in standard
two's complement notation (see \cite{Koren:Algorithms}), then $\textit{Bool}$
would have included the standard bitwise negation operator $\lnot$, and standard
arithmetic subtraction (``$-$'') would have been implementable as
$a-b=a+\lnot b+1$. Because we deal with RAMs working over nonnegative integers,
we have a negation operator that only works up to (and including)
the most-significant ``$1$'' bit of its operand.
This means, for example, that for any $a$,
$a + \lnot a$ is a number of the form $2^m-1$, with the minimal $m$ such
that $2^m-1 \ge a$. We therefore define the function
$$\set(a)\defeq a + \lnot a.$$

To determine whether $a\le b$, for some $a$ and $b$, without utilisation of
either ``$\minus$'' or any comparison operator other than equality, consider two
cases. First, it may be that $a$ and $b$ differ in their bit-length. In this
case, it is enough to check that $\set(a)<\set(b)$, which can be implemented
by $\set(a)\lor\set(b)\neq\set(a)$. Alternatively, it may be that
$\set(a)=\set(b)$. In this case,
it is known that $a$ and $\lnot b$ are both in the range $[0,\set(b)]$,
so the question whether $a\le b$ can
be formulated equivalently as whether $a+\lnot b$ (which by definition
of $\set$ also equals
$\set(b)+(a-b)$) is in the range $[0,\set(b)]$ or in the range
$[\set(b)+1,2\set(b)]$. These two ranges can be differentiated by their value
in a single bit position, namely that indicated by $\set(b)+1$, so, altogether,
we can define
\begin{align*}
a\le b &\defeq \set(a)\lor\set(b)\neq\set(a) \\
& \text{or }(\set(a)
=\set(b)\text{ and }(a+\lnot b)\land(\set(b)+1)=0).
\end{align*}
All other comparison operators can now be computed from ``$\le$'', making their
inclusion in the RAM's instruction set superfluous.

Consider now the case $b<a$. Let $c=b+\set(a)$. Once again, by definition
$a+\lnot c=\set(c)+(a-c)=(\set(c)-\set(a))+(a-b)$. If we wish to calculate
$a\minus b$, which, in this case, equals $a-b$, we can now utilise the fact that
the result must be in the range $[0,\set(a)]$. Because $c\ge a$, we
know that $(\set(c)-\set(a))\land\set(a)=0$, so $a\minus b$ can be calculated as
$(a+\lnot c)\land \set(a)$.

Putting it all together, we can define a function to calculate $a\minus b$ as
follows.
\begin{equation*}
a \minus b\defeq
\begin{cases}0 & \text{if $a\le b$,} \\
(a+\lnot (b+\set(a))) \land \set(a) & \text{otherwise.}
\end{cases}
\end{equation*}

Right shifting is not as universally replaceable. However, it can be
simulated under appropriate circumstances, as outlined by
Lemma~\ref{L:rightshift}.

\begin{lem}\label{L:rightshift}
For
$\textit{op}=\{\leftarrow,\rightarrow,[+],[\minus],\textit{Bool\/}\}$,
if a
$\text{RAM}[\textit{op}]$ does not use indirect addressing and is restricted
to shifts by bounded amounts, it can be simulated by a
$\text{RAM}[\textit{op}\setminus\{\rightarrow\}]$ without loss in time
complexity. This result remains true also if the RAM can apply
``$a \rightarrow b$'' when $b$ is the (unbounded) contents of a register,
provided that the calculation of $b$ does not involve use of the
``$\rightarrow$'' operator.
\end{lem}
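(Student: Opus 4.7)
The plan is to transform the program so that, throughout execution, each original register $R_i$ is shadowed by a register $R_i'$ satisfying the runtime invariant $R_i' = R_i \cdot M$, where $M$ is an auxiliary register initialised to $1$ that holds a common scale factor. Because there is no indirect addressing, the set of shadowed registers has compile-time-bounded size, and the augmentation costs only a constant factor per step. Under this representation, addition, natural subtraction, the symmetric bitwise operations $\land$, $\lor$, $\oplus$ and equality tests commute with multiplication by $M$ and so translate to the same opcode on the primed registers; the natural negation $\lnot R_i$ is handled by $R_j' \Leftarrow (\lnot R_i') \oplus \lnot M$, exploiting the fact that $\lnot M = M \minus 1$ since $M$ is always a power of two. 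The acceptance criterion $R_0\neq 0$ is preserved since $M \geq 1$, and a left shift $R_j \Leftarrow R_i \leftarrow c$ translates unchanged to $R_j' \Leftarrow R_i' \leftarrow c$.

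The heart of the simulation is the replacement for $R_j \Leftarrow R_i \rightarrow c$. I would build the mask $\mu \defeq \lnot(M \leftarrow c) = 2^{L+c}-1$ (where $L = \log_2 M$) in a single $\lnot$/$\leftarrow$ pair, then set $R_j' \Leftarrow R_i' \oplus (R_i' \land \mu)$, which zeros the bottom $L+c$ bits of $R_i'$ and therefore yields $\lfloor R_i/2^c\rfloor \cdot 2^{L+c}$. To restore a uniform scale we perform $R_l' \Leftarrow R_l' \leftarrow c$ for every other register $l$, together with $M \Leftarrow M \leftarrow c$. Because $c$ and the number of registers are both compile-time constants, the whole replacement uses $O(1)$ native operations, preserving time complexity. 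Crucially, none of these steps requires the original ``$+$'' or ``$\minus$'', so the simulation works even when both are absent from $\textit{op}$.

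For the unbounded extension, given $R_j \Leftarrow R_i \rightarrow b$ with $b$ the contents of a register whose computation uses no ``$\rightarrow$'', I would additionally maintain, alongside each scaled shadow $R_i'$, an unscaled copy $R_i^{\text{raw}}$ for every register that lies on some static dataflow path to such a $b$. These copies are updated with the same opcodes as in the original, which by the lemma's hypothesis never uses ``$\rightarrow$'', so right-shift outputs never need an unscaled shadow. The right-shift simulation then reads $b$'s raw value directly in $R_l' \Leftarrow R_l' \leftarrow b$, $M \Leftarrow M \leftarrow b$, and in the new-mask computation, each still $O(1)$ operations. The main obstacle I anticipate is preserving the invariant at control-flow joins; keeping $M$ as an ordinary runtime register rather than compile-time metadata resolves this automatically, since $R_i' = R_i \cdot M$ refers to whatever value $M$ happens to carry at each program point, and every replacement step restores it.
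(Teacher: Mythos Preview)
Your approach is essentially the paper's: keep a common power-of-two scale factor $M$ (the paper's $R'[k{+}1]$) with the invariant $R_i' = R_i \cdot M$, simulate $R_j \Leftarrow R_i \rightarrow c$ by left-shifting every other register (and $M$) while masking off the low bits of the target, and handle unbounded shift amounts via a parallel unscaled copy of the computation---the paper keeps unscaled shadows of \emph{all} registers, you restrict to those on a dataflow path to $b$, but the idea is identical. One small slip to patch: your rule $R_j' \Leftarrow (\lnot R_i') \oplus \lnot M$ for the tweaked negation fails at $R_i = 0$, since by the paper's convention $\lnot 0 = 0$ but your formula returns $M-1$; a one-line guard on $R_i' = 0$ fixes this.
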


\begin{proof}
We begin by considering the case of bounded shifts.

A RAM that does not use indirect addressing is inherently able to access only
a finite set of registers. Without loss of generality, let us assume that these
are $R[0],\ldots,R[k]$. The simulating RAM will have
$R'[0],\ldots,R'[k+1]$ satisfying the invariant
\[\forall i:0\le i\le k, R[i]=R'[i]/R'[k+1].\]
To do this, we initialise $R'[k+1]$ to be $1$, and proceed with the simulation
by translating any action by the simulated RAM on $R[i]$, for any $i$, to the
same action on $R'[i]$.\footnote{An action involving an explicit ``$1$'' (except
for shifting by 1) will
have the ``$1$'' replaced by $R'[k+1]$ in the simulation.}
We do this for all actions except $R[i]\rightarrow X$,
which is an operation that is unavailable to the simulating RAM.

To simulate ``$R[j]\Leftarrow R[i]\rightarrow X$'', we perform the following.
\begin{enumerate}
\item $R'[j]\Leftarrow R'[i]$.
\item $\forall x:x\neq j, R'[x]\Leftarrow R'[x]\leftarrow X$.
\item $R'[j]\Leftarrow R'[j] \clr \lnot R'[k+1]$.
\end{enumerate}
We note regarding the second step that this operation is performed also for
$x=k+1$. The fact that $k$ is bounded ensures that this step is performed in
$\BigO(1)$ time.

Essentially, if ``$R[j]\Leftarrow R[i]\rightarrow X$'' is thought of as
``$R[j]\Leftarrow \lfloor R[i]/2^X\rfloor$'', Step 1 performs the
assignment, Step 2 the division, and Step 3 the truncation.

In order to support ``$R[j]\Leftarrow R[i]\rightarrow X$'' also when $X$ is
the product of a calculation, the simulating RAM also performs, in parallel to
all of the above, a direct
simulation that keeps track of the register's native values. In this alternate
simulation, right shifts are merely ignored. Any calculation performed by the
simulated RAM that does not involve right shifts will, however, be calculated
correctly, so the value of $X$ will always be correct.
\end{proof}

Because the conditions of Lemma~\ref{L:rightshift} hold for all constructions
described here, we assume the availability of ``$\rightarrow$'', throughout.

\subsection{Vectors and tableaux}\label{SS:known}

Consider the following definitions, following \cite{Brand:indirect}:

\begin{defi}[Vectors]\label{D:vectors}
A triplet $(m,V,n)$ of integers will be called an \emph{encoded vector}.
We refer to $m$ as the $width$ of the vector, $V$ as the \emph{contents} of the
vector and
$n$ as the $length$ of the vector. If $V=\sum_{i=0}^{n-1} 2^{mi} k_i$ with
$\forall i:0 \le i < n \Rightarrow 0 \le k_i < 2^m$, then $[k_0,\ldots, k_{n-1}]$ will be
called the \emph{vector} (or, the \emph{decoded vector}), and the $k_i$ will be
termed the \emph{vector elements}. Notably, vector elements belong to a finite
set of size $2^m$ and are not general integers. It is well-defined to consider
the most-significant bits (MSBs) of vector elements. Nevertheless, any $n$
integers can be encoded as a vector, by choosing a large enough $m$.

Actions described as operating on the vector are mathematical operations on
the encoded vector (typically, on the vector contents, $V$). However, many
times we will be more interested in analysing these mathematical operations in
terms of the effects they have on the vector elements.
Where this is not ambiguous, we will name vectors by their contents. For
example, we can talk about the ``decoded $V$'' to denote the decoded vector
corresponding to some encoded vector whose contents are $V$.
\end{defi}

\begin{defi}[Instantaneous Description]\label{D:instant_desc}
Let $\mathcal{T}$ be a TM working on a bounded tape of size $s$ and having a
state space
that can be described by $c$ bits.

An \emph{instantaneous description} of $\mathcal{T}$ at any point in its
execution is the tuple
\[
(\textit{tape},2^{\textit{headpos}},\textit{state}\times 2^{\textit{headpos}},
1,2^s)
\]
at that
point in its execution, where $\textit{tape}$ is the instantaneous content of
the TM's tape, translated into an integer,
$\textit{headpos}$ is the distance from the position of the head to the end
of the tape, and
$\textit{state}$ is an integer
indicating the current state of the finite control, where a mapping from
states to integers can be arbitrary subject to the requirement that all integers
must be at most $c$ bits in length. By convention, the initial state of the
TM is mapped to zero.

(The constants $1$ and $2^s$ are needed because they denote the endpoints of
the portion of the tape that can be occupied by the TM's head.)
\end{defi}

This particular format to describe the instantaneous state of a TM was chosen
because advancing the TM to the next instantaneous state can be done using it
in constant time with only bit shifts and bitwise Boolean operations
(but without the tweaked ``$\lnot$''). Furthermore, only
bit shifts by a constant amount are required. That is, one can avoid all
bit shift operations in which the right-hand operand depends on
$\textit{tape}$, $\textit{headpos}$, $\textit{state}$ or $s$. No constants
are required in the calculation.

Let $X_i$ (a tuple of $5$ integers) denote the instantaneous description of
$\mathcal{T}$ after $i$ execution steps. We denote by $\hat{\mathcal{T}}$ the
function that calculates in constant time
$X_{i+1}=\hat{\mathcal{T}}(X_i)$ using only bitwise
Boolean operations and shifts by a constant amount.

\begin{defi}[Tableau]
Let $\mathcal{T}$, $s$ and $c$ be as in Definition~\ref{D:instant_desc}.

A \emph{tableau} is a tuple $(m,T,H,S,n)$, where $m=s+c-1$ and the vectors
$(m,T,n)$, $(m,H,n)$ and $(m,S,n)$ are such that $(T[i],H[i],S[i],1,2^s)$
is the instantaneous description of
$\mathcal{T}$ after $i$ execution steps.
\end{defi}

A technique used extensively in \cite{Simon:division2, Brand:ALNs} to verify
the validity of an entire tableau in constant time is to define vector
$(m,I,n)$,
such that $I[i]=1$ for every $i$, as well as the scalar $I_s=I\leftarrow s$,
and then to calculate
\[
(T',H',S',I',I'_s) \Leftarrow \hat{\mathcal{T}}(T,H,S,I,I_s).
\]

With a little care, the fact that $\hat{\mathcal{T}}$ works only with bitwise
Boolean functions and constant shifts can be leveraged to make it work
independently for each instantaneous description inside the vector contents
that comprise the tableau candidate. Specifically, if $(m,T,H,S,n)$ is a valid
tableau for $\mathcal{T}$ working on input $\textit{inp}$, then for every $i$
except $i=n-1$ we should have
$T'[i]=T[i+1]$, $H'[i]=H[i+1]$, $S'[i]=S[i+1]$.

These equalities, in turn, can be verified by confirming
$((T'\leftarrow m)\oplus T)\land (1\leftarrow (nm)\minus 1)=\textit{inp}$,
$((H'\leftarrow m)\oplus H)\land (1\leftarrow (nm)\minus 1)=1$ and
$((S'\leftarrow m)\oplus S)\land (1\leftarrow (nm)\minus 1)=0$,
where ``$\oplus$'' signifies the exclusive OR (XOR) bitwise operation.
To see this, consider, for example,
$(T'\leftarrow m)\oplus T$.
In a correct tableau, this will equal $T[0]+(T[n-1]\leftarrow (nm))$. The later
``$\land$'' takes care of the part dependent on $T[n-1]$. The tableau is valid
if it starts with the correct element, $T[0]=\textit{inp}$,
and every subsequent element satisfies the relation $T'[i]=T[i+1]$. The
same verification can now be repeated for $H$ and $S$.

It is also possible to verify the correctness of $I$ and $I_s$ in a similar way:
$(I\leftarrow m)\oplus I = 1\leftarrow (nm)+1$, $I_s=I\leftarrow s$.

In practice we will always pick $n$ to be a known power of two, $n=2^k$, so
that $nm$ can be calculated as $m\leftarrow k$, with no need for a
multiplication operation.

Once a tableau is verified, examining $(T[n-1],H[n-1],S[n-1])$ can answer the
question of how the execution described by the tableau terminates. Options are
\begin{enumerate}
\item Termination and acceptance of the input.
\item Termination and rejection of the input.
\item Nontermination.
\item The TM has exceeded its allotted tape requirements.
\end{enumerate}
In constant time, we are able to tell these options apart.

Let $\hat{\mathcal{R}}$ be the constant time RAM program that verifies an input
tableau and returns its validity and termination status, as described above.
The function $\hat{\mathcal{R}}$ takes as inputs,
in addition to the original tableau and $s$, also the helper vector $I$.
The vector $I_s$ does not need to be given as an extra input, because it can
be calculated as $I\leftarrow s$.

In \cite{Brand:ALNs}, the point is made that the total number of possible
instantaneous descriptions is bounded by $B\defeq 2^{3(s+c-1)}$. Thus, there is
no need to test the validity of any tableau longer than $B$. If a TM working on
a tape of size $s$ has not terminated on its first $B$ steps, it is guaranteed
to be in an infinite loop.

The paper then goes on to make the point that this limit makes the grand total
of possible tableaux one needs to test also finite. It is the set of all
possible tableaux of length $B$.

\subsection{Maps}\label{SS:novel}

\begin{defi}[Map]
A \emph{map} (or, an \emph{encoded map}) is a triplet, $(\hat{L},\hat{I},\hat{w})$, of integers,
where $\hat{L}$ is the \emph{contents} of the map, $\hat{I}$ is the \emph{domain} of the
map and $\hat{w}$ is the \emph{width} of the map. The domain, $\hat{I}$, must satisfy the
criterion that any two ``$1$''s in its binary representation are at least $\hat{w}$
bit positions apart. The contents, $\hat{L}$, must satisfy that all
bit-positions of $\hat{L}$ are zero, except the $\hat{w}$ positions immediately
following a position that is a ``$1$'' in the domain.\footnote{The latter
criterion can be worded in a formula as
$\hat{L}=\hat{L} \land ((\hat{I} \leftarrow \hat{w})-\hat{I})$.}
For simplicity we assume $\hat{w}>1$.

Conceptually, a map represents a mapping from a list of \emph{indices}
$\textit{ind}_1,\ldots$ to a list of \emph{map elements} $k_1,\ldots$, in the following
way: $\hat{I}=\sum_i (1\leftarrow \textit{ind}_i)$ and
$\hat{L}=\sum_i (k_i \leftarrow \textit{ind}_i)$. The $k_i$ are required to be in the
range $0\le k_i < 2^{\hat{w}}$. The $\textit{ind}_i$ are required to satisfy that for
every $i$, $\hat{w}+\textit{ind}_i\le\textit{ind}_{i+1}$. Under these restrictions,
there is a one-to-one correspondence between mappings and encoded maps. Such
mappings will be termed \emph{decoded maps}. A map will often be
referred to by the name of the encoded map contents. This is done, for example,
in the following notation, which we introduce to signify the relation between an
encoded map and its underlying mapping: $k_i=\hat{L}[\textit{ind}_i]$.

Note that the map elements belong to a finite set, the elements of which are
representable by $\hat{w}$-bit strings. As such, it is well-defined to consider their
most significant bits (MSBs).

Operations on maps are RAM operations performed on the encoded maps. We will,
however, mostly be interested in the effects these have on the map elements.
Such operations may include multiple maps that share the same $\hat{I}$ and $\hat{w}$,
in which case, when we say that we apply a function $f$ on two maps, $V$ and
$U$, we typically mean that we wish to produce a result, $W$, such that
for all indices $W[\textit{ind}_i]=f(V[\textit{ind}_i],U[\textit{ind}_i])$.
\end{defi}

In Section~\ref{SS:known}, we have shown that a suitably-designed function,
$\hat{\mathcal{T}}$ that was designed to work on integers can be run with
vector contents as input, with the effect that it will function independently
on each vector element. The function $\hat{\mathcal{R}}$, also introduced in
the same section, has similar properties and can be given as input a tuple of
map contents, and will act separately on each element of the maps. The
effect is that if each map element houses a tableau candidate,
function $\hat{\mathcal{R}}$ can determine, in constant time, which of
the candidates is a valid tableau and which not, with valid tableaux being
signified by zeroes in the outputs and non-valid tableaux being signified
by nonzeros.

We make use of several variants of this parallelised
$\hat{\mathcal{R}}$. The one described
above we denote
$\hat{\mathcal{R}}_\textit{valid}$, because a zero in the output indicates
that the tableau is valid. Other variants used here are
$\hat{\mathcal{R}}_\textit{valid-and-accepting}$,
$\hat{\mathcal{R}}_\textit{valid-and-rejecting}$ and
$\hat{\mathcal{R}}_\textit{valid-and-exceeded-tape}$, which are RAM
programs in which an output of zero indicates that the input tableau is
``valid and accepting'', ``valid and rejecting'' or ``valid and indicating that
the simulated TM exceeded its tape allocation'', respectively.
Each of these can be programmed by applying the same techniques as shown here.

The most salient difference between the parallelised version of
$\hat{\mathcal{R}}$ and its standard version is that the standard version
used freely constants such as $1$ and $\textit{inp}$ in calculating and
verifying $T[0]$, $H[0]$ and $S[0]$. For these to be usable by the parallelised
version of $\hat{\mathcal{R}}$, they have to
be multiplied by the map domain, $\hat{I}$.
The constant $1$ becomes $1\times \hat{I}=\hat{I}$, which is available already,
but
calculating $\textit{inp}\times \hat{I}$ may not be as straightforward. Whereas
the rest of the computations described here can be performed in constant time,
calculating $\textit{IN}=\textit{inp}\times \hat{I}$ requires $\Th(n)$ time,
where
$n$ is the bit-length of $\textit{inp}$. The calculation algorithm itself is
the classic long-multiplication algorithm. It is described in
Algorithm~\ref{A:mult}.

\begin{algorithm}
\caption{Calculating $\textit{IN}=\hat{I}\times \textit{inp}$}
\label{A:mult}
\begin{algorithmic}[1]
\State $\textit{IN} \Leftarrow 0$
\State $\textit{acc} \Leftarrow \hat{I}$
\State $r \Leftarrow \textit{inp}$
\While {$r\neq 0$}
\If {$r\land 1=1$}
\State $\textit{IN} \Leftarrow \textit{IN}+\textit{acc}$
\EndIf
\State $r\rightarrow 1$
\State $\textit{acc}\leftarrow 1$
\EndWhile
\State \Return $\textit{IN}$
\end{algorithmic}
\end{algorithm}

The tool needed in order to work conveniently with such an output is an
element-by-element comparator. We would like to check for each element
equality to zero and signify this by a ``$1$'', with non-equality being
signified by zero. A sequence of function definitions working on maps,
culminating in a general element-by-element equality operation that works
in constant time, parallelising over all map elements, is described below.

Let $(V,I,w)$ and $(U,I,w)$ be two maps.

The total set of bits that can be used by the map contents is
\[
\textit{MASK}(I,w)\defeq(I\leftarrow w)\minus I.
\]

For convenience, we divide these into two subsets. We take the lowest $w-1$
bits of each map element to be its
``data'' bits, and the MSB to be the ``flag'' bit. The following
functions extract these positions.
\[
\textit{FLAGS}(I,w)\defeq I\leftarrow (w\minus 1),
\]
\[
\textit{DATA}(I,w)\defeq\textit{FLAGS}(I,w)\minus I.
\]

To actually extract the data from the positions we use
\[
\textit{FLAGS}(V,I,w)\defeq V\land \textit{FLAGS}(I,w),
\]
\[
\textit{DATA}(V,I,w)\defeq V\land \textit{DATA}(I,w).
\]

Summing the data of two maps can be performed by
\[
\textit{ADD}(V,U,I,w)\defeq(\textit{DATA}(V,I,w)+\textit{DATA}(U,I,w))\oplus \textit{FLAGS}(V,I,w)\oplus \textit{FLAGS}(U,I,w).
\]
The ADD function, as implemented here, avoids overflows by calculating the sum
modulo $2^w$. In order to
find out if an overflow occurred, we can calculate the carry bit.
\begin{align*}
\textit{CARRY}(V,U,I,w) \defeq
(V+U\minus \textit{ADD}(V,U,I,w))\rightarrow w.
\end{align*}

The following function implements bitwise negation:
\[
\textit{NEG}(V,I,w)\defeq\textit{MASK}(I,w)\minus V,
\]

and, as a last helper function, the following function calculates
a map, $\textit{RC}$, such that for every $i$,
$\textit{RC}[\textit{ind}_i]$ is $1$ if $V[\textit{ind}_i]>U[\textit{ind}_i]$,
and is $0$, otherwise, this being an element-by-element ``greater than''
comparison:
\[
\textit{GT}(V,U,I,w)\defeq\textit{CARRY}(V,\textit{NEG}(U,I,w),I,w).
\]

With this build-up, we can finally implement element-by-element equality as
follows:
\[
\textit{EQ}(V,U,I,w)\defeq I\minus\textit{GT}(V,U,I,w)\minus \textit{GT}(U,V,I,w).
\]

In order to simultaneously check and verify many candidate tableaux, what we
need are integers $\hat{L}$ and $\hat{I}$, as well as
the length, $s$, of the tape on which TM $\mathcal{T}$ runs, and the TM's input,
$\textit{inp}$. The inputs $\hat{L}$ and $\hat{I}$ can
be interpreted as follows.

Let $w=s+c\minus 1$ be the width of the
tableau vectors being checked, $B=1\leftarrow (w+w+w)$ will be the length of the
tableau vectors being checked. The total bit-length of these vectors is
$w_m=B\times w=w\leftarrow (w+w+w)$. Therefore, they can be stored as
elements inside maps of such width. Specifically, let $\hat{w}=4w_m$.
The map $(\hat{L}, \hat{I}, \hat{w})$ holds the information of all tableaux to
be verified, by keeping in each element, $i$, of the map the following
composite:
\[
L[i]=(T_i\leftarrow 3w_m)+(H_i\leftarrow 2w_m)+(S_i\leftarrow w_m)+I_i,
\]
where $(w,T_i,H_i,S_i,B)$ is the $i$'th tableau to be verified, and
$(w,I_i,B)$ is the additional vector (the vector whose elements are all $1$)
required to execute $\hat{\mathcal{R}}$.

In order to perform the actual verification, we begin by calculating
$w_m$, and then
\begin{align*}
T &\Leftarrow (\hat{L}\rightarrow 3w_m)\land\textit{MASK}(w_m) \\
H &\Leftarrow (\hat{L}\rightarrow 2w_m)\land\textit{MASK}(w_m) \\
S &\Leftarrow (\hat{L}\rightarrow w_m)\land\textit{MASK}(w_m) \\
I &\Leftarrow \hat{L}\land\textit{MASK}(w_m),
\end{align*}
this unpacking all $T_i$, $H_i$, $S_i$ and $I_i$ simultaneously as the $i$'th
elements of the maps
$(T,\hat{I},w_m)$, $(H,\hat{I},w_m)$, $(S,\hat{I},w_m)$ and
$(I,\hat{I},w_m)$, respectively.

As a last preparation step, we calculate $\textit{IN}=\hat{I}\times\textit{inp}$
by means of Algorithm~\ref{A:mult}.

We are now ready to run $\hat{\mathcal{R}}_\textit{valid}$ on the inputs
$(T,H,S,I,s)$. This is essentially the same program as $\hat{\mathcal{R}}$, with
the only differences being
that ``$1$''s in $\hat{\mathcal{R}}$'s code are are replaced by ``$\hat{I}$''s
in $\hat{\mathcal{R}}_\textit{valid}$ (except when shifting by $1$) and
``$\textit{inp}$''s are replaced by ``$\textit{IN}$''s.
The program's output is a map, $(R,\hat{I},w_m)$, such that
$R[i]$ is zero if and only if the $i$'th element of the
input is a valid tableau candidate, and nonzero otherwise.

By running $\textit{EQ}$ over this vector, to compare it with the zero vector,
we reverse the zero/nonzero distinction. Now, nonzero (one) elements represent
valid tableau candidates and zero elements represent invalid candidates.
Consider the contents of this map. If it is nonzero, this should be interpreted
as ``some of the tableau candidates in the input are valid''.

We can, similarly, create such vectors for valid-and-accepting tableaux and
valid-and-rejecting tableaux, etc..
Here, the meaning is ``some of the tableau candidates are valid and accepting''
(or valid and rejecting), from which we can conclude that $\mathcal{T}$
accepts (rejects).

The full verification algorithm is given as Algorithm~\ref{A:map_verify}.

\begin{algorithm}
\caption{$\textit{Verify}(\hat{L},\hat{I},s,\textit{inp})$: Simultaneous verification of tableau candidates}
\label{A:map_verify}
\begin{algorithmic}[1]
\State $w\Leftarrow s+c\minus 1$
\State $B\Leftarrow 1\leftarrow (w+w+w)$
\State $w_m\Leftarrow w\leftarrow (w+w+w)$
\State $T \Leftarrow (L\rightarrow (w_m+w_m+w_m))\land\textit{MASK}(w_m)$
\State $H \Leftarrow (L\rightarrow (w_m+w_m))\land\textit{MASK}(w_m)$
\State $S \Leftarrow (L\rightarrow w_m)\land\textit{MASK}(w_m)$
\State $I \Leftarrow L\land\textit{MASK}(w_m)$
\State $\textit{IN} \Leftarrow \hat{I}\times\textit{inp}$
\Comment Multiplication performed using Algorithm~\ref{A:mult}.
\State $R_\textit{valid} \Leftarrow \hat{\mathcal{R}}_\textit{valid}(T,H,S,I,\hat{I},\textit{IN},s)$
\State $R_\textit{valid-and-accepting} \Leftarrow \hat{\mathcal{R}}_\textit{valid-and-accepting}(T,H,S,I,\hat{I},\textit{IN},s)$
\State $R_\textit{valid-and-rejecting} \Leftarrow \hat{\mathcal{R}}_\textit{valid-and-rejecting}(T,H,S,I,\hat{I},\textit{IN},s)$
\State $R_\textit{valid-and-exceeded-tape} \Leftarrow \hat{\mathcal{R}}_\textit{valid-and-exceeded-tape}(T,H,S,I,\hat{I},\textit{IN},s)$
\If {$\textit{EQ}(R_\textit{valid-and-accepting},0,\hat{I},w_m)\ne 0$}
\State \textbf{Output:} $\mathcal{T}$ accepts on input $\textit{inp}$.
\ElsIf {$\textit{EQ}(R_\textit{valid-and-rejecting},0,\hat{I},w_m)\ne 0$}
\State \textbf{Output:} $\mathcal{T}$ rejects on input $\textit{inp}$.
\ElsIf {$\textit{EQ}(R_\textit{valid-and-exceeded-tape},0,\hat{I},w_m)\ne 0$}
\State \textbf{Output:} $\mathcal{T}$ requires more than $s$ tape elements on input $\textit{inp}$.
\ElsIf {$\textit{EQ}(R_\textit{valid},0,\hat{I},w_m)\ne 0$}
\State \textbf{Output:} $\mathcal{T}$ rejects input $\textit{inp}$ by entering an infinite loop.
\Else
\State \textbf{Output:} Simulation failed. No valid tableau found. It was not determined whether $\mathcal{T}$ accepts input $\textit{inp}$.
\EndIf
\end{algorithmic}
\end{algorithm}

\section{$\text{BRP-RAM}=\text{PEL}$}\label{S:BRP_PEL}

Arguably, the more difficult direction in proving Lemma~\ref{L:BRPRAM} is
\[
\text{PEL}\subseteq\text{BRP-RAM}[+,\leftarrow,\textit{Bool\/}].
\]
We prove this by means of an explicit algorithm that simulates any TM running
in space PEL, on a BRP-RAM. The top level of this algorithm is depicted
in Algorithm~\ref{A:R_sim_TM}. This algorithm simulates a TM, $\mathcal{T}$,
the state of whose finite control can be described in $c$ bits, running on
input $\textit{inp}$.

\begin{algorithm}
\caption{Top level of a BRP-RAM simulating a PEL TM, $\mathcal{T}$, working on input $\textit{inp}$.}
\label{A:R_sim_TM}
\begin{algorithmic}[1]
\State $\textit{maxstep} \Leftarrow 1$
\Loop \label{Step:infiloop}
\State $s\Leftarrow {}^\textit{maxstep}2$\label{Step:s}
\State Generate $\hat{L}$ and $\hat{I}$ for simulation on tape of length $s$.\label{Step:LI}
\Comment See Section~\ref{SS:mult}.
\State $\textit{Verify}(\hat{L},\hat{I},s,\textit{inp})$\label{Step:sim}
\Comment Probabilistic verification. See Algorithm~\ref{A:map_verify}.
\State Check if the simulation succeeded. Halt and reject if not.\label{Step:check_sim}
\State Check if the simulated $\mathcal{T}$ has exceeded its tape allocation. Halt and report the final state if not. (If it has not reached a final state, reject.)\label{Step:check_fin}
\State $\textit{maxstep} \Leftarrow \textit{maxstep}+\textit{maxstep}$
\EndLoop
\end{algorithmic}
\end{algorithm}

We require Algorithm~\ref{A:R_sim_TM} to be a BRP-RAM algorithm.
This can be broken down into three conditions.
\begin{enumerate}
\item If the run of $\mathcal{T}$ has tape requirements bounded by PEL, the
algorithm must run, deterministically, in polytime.
\item If the input is not in the language accepted by $\mathcal{T}$, it must be
rejected.
\item If the input is in the language, it must be accepted with probability at least $1/2$.
\end{enumerate}

Let us begin by considering the first condition.
Algorithm~\ref{A:R_sim_TM} contains a loop on
Step~\ref{Step:infiloop}. The complexity of the algorithm is determined by
the number of times the algorithm goes through the loop before halting at
either Step~\ref{Step:check_sim} or Step~\ref{Step:check_fin}, and by
the complexities of the individual steps.

In Algorithm~\ref{A:R_sim_TM}, Step~\ref{Step:s} can be implemented by the
straightforward algorithm, which acts in $\BigO(\textit{maxstep})$ time.
Step~\ref{Step:sim} acts in $\BigO(n)$ time, where $n$ is the bit-length of
the output. This time requirement stems from the application of
Algorithm~\ref{A:mult}. All other parts of this step run in constant time.
The greatest challenge in constructing the
algorithm, to which Sections~\ref{SS:mult} and \ref{SS:k} are dedicated, is to
implement Step~\ref{Step:LI} efficiently. Specifically, we implement it by use
of an $\BigO(\textit{maxstep})$ time algorithm.
The rest of the body of the loop of Step~\ref{Step:infiloop} runs in constant
time, for a total of $\BigO(\textit{maxstep}+n)$ time execution.

Ultimately, the combined complexity of the algorithm is $\Th(m+n\log m)$,
where $m$ is the ultimate value of $\textit{maxstep}$ in running the
algorithm,
which is also $\Th(m' +n \log m')$, where $m'$ is the penultimate value of
$\textit{maxstep}$. Because we know that the algorithm did not halt during its
penultimate cycle through the loop, we know that the simulation of $\mathcal{T}$
over a tape of length $s'={}^{m'}2$ was successful
(Step~\ref{Step:check_sim}) and showed that $\mathcal{T}$ requires more than
$s'$
tape elements (Step~\ref{Step:check_fin}).

Because the space requirement for $\mathcal{T}$ is known to
be more than $s'$, and because by assumption it is in PEL,
by definition $m'$ is a polynomial in $n$. Hence, the first
condition, that of the RAM running in polynomial time, is
satisfied.

The second condition is that all inputs not in the language must be rejected.
The only way to accept an input in Algorithm~\ref{A:R_sim_TM} is in
Step~\ref{Step:check_fin}, when the simulation shows that $\mathcal{T}$ has
halted and accepted the input, after the simulation was verified as correct in
Step~\ref{Step:check_sim}. Thus, the second condition is also satisfied.

The third condition is that if $\textit{inp}$ is in the language, its
probability of being rejected should be at most $1/2$. The only place where
Algorithm~\ref{A:R_sim_TM} can falsely reject is at Step~\ref{Step:check_sim}.
If the simulation is correct, Step~\ref{Step:check_fin} is able to determine
$\mathcal{T}$'s true final state. It neither falsely rejects nor falsely
accepts.

Suppose now that we are able to devise our simulation of $\mathcal{T}$ so
that in the first iteration through the loop it has probability $p_1$ of
failing, in the second iteration it has probability $p_2$, etc., such that
the sum of the entire $p_i$ sequence is no more than $1/2$. (For example,
we can set $p_i=2^{-1-i}$.) The probability that at least one failure
occurred is certainly no more than the sum of all $p_i$, indicating that the
probability of false rejection is properly bounded.

Sections~\ref{SS:mult} and \ref{SS:k} describe how to generate
$\hat{L}$ and $\hat{I}$ in Step~\ref{Step:LI} of the algorithm, so as to meet
the requirement that a failure of the algorithm occurs in probability bounded
by $2^{-1-i}$, thus completing the description of the simulating algorithm.

\subsection{Calculating $\hat{L}$ and $\hat{I}$}\label{SS:mult}

Summarising the conditions required of the map
$(\hat{L},\hat{I}, \hat{w} )$ to be generated at iteration
$i$ of the loop on Step~\ref{Step:infiloop} of Algorithm~\ref{A:R_sim_TM}:
\begin{enumerate}
\item In the binary representation of the output $\hat{I}$, any two ``$1$''s should be
at least $ \hat{w} $ bit positions apart.
(We refer to this condition as $ \hat{w} $-sparseness.)
\item The correct tableau, which is a bit-string of length $ \hat{w} $ bits, should appear with probability of at least $1-2^{-i-1}$ as a substring of the bits of $\hat{L}$ beginning at some ``1'' position of $\hat{I}$.
\item All bits of $\hat{L}$ which are more than $ \hat{w} $ bit positions from the preceding ``1'' position of $\hat{I}$ should be zeroes.
\item The generation of $I$ and $L$, with
$\hat{w}$ in ${}^{\Th(\textit{maxstep})}2$, should be performed in
$\BigO(\textit{maxstep})$ time.
\end{enumerate}

The general framework we use to meet these requirements is as follows.
First, we choose a number, $k$, as a function of $\textit{maxstep}$.
Then, we generate $\hat{I}$ to be a valid map domain in the range $[0,2^{k- \hat{w} })$.
Last, we generate $\hat{L}$ to meet with the remaining requirements.

Generating $\hat{L}$ is ultimately done by
\[
\hat{L}\Leftarrow \textit{RAND}(2^k)\land\textit{MASK}(\hat{I},\hat{w}).
\]
This constant time procedure ensures that if $\hat{I}$ is a valid map domain
of Hamming weight
$h$, the resulting map will be valid, and will include $h$ independent,
uniformly chosen elements. The correct tableau is one of the possible elements
for the map. Thus, the greater $h$ the higher the probability that one of the
randomly-chosen elements is the correct one.
We will design $\hat{I}$ and $k$ so that $h$ will be,
probabilistically, high enough to ensure that this probability is at least
$1-p_i>1-2^{-i-1}$.

In this section we discuss the generation of $\hat{I}$ given a choice of
$k$, and in Section~\ref{SS:k} the generation of $k$.

Consider the sparseness condition for $\hat{I}$.
We note that unlike $\hat{I}$'s Hamming weight, which may be probabilistically
chosen to meet with the probabilistic success criteria, the condition
of $ \hat{w} $-sparseness is a deterministic condition.
We cannot generate $I$ by a random process that is simply biased towards ``$0$''
bits. Even if such a process has high probability of meeting the criterion,
it still admits the possibility that the
condition will not be satisfied and, as a result, that the calculation will be
incorrect. Such an approach can create a positive false-accept probability,
which is not acceptable in the BRP model.

We construct $\hat{I}$ by means of a procedure that iteratively dilutes the
``1''s in $\hat{I}$. After $i$ steps, $\hat{I}$ is guaranteed to be
$w_i$-sparse, with $w_i> {}^i 2$. In this way, generating a
$ \hat{w} $-sparse $\hat{I}$ can be done in $\BigO(\textit{maxstep})$ steps.
Each step will be accomplished by a
constant time procedure, so the total time complexity of
the algorithm is as required.

Let $R_i$, for $i=0,\ldots$, be independent random values generated by calls
to $\textit{RAND}(2^k)$ for our chosen $k$.
In the definition of maps we assumed that the width of any map is at least $2$,
so we bootstrap the dilution process by Algorithm~\ref{A:first_step}.

\begin{algorithm}
\caption{Creating $\hat{I}$ for $w_0=2$}
\label{A:first_step}
\begin{algorithmic}[1]
\State $\hat{I}\Leftarrow R_0$
\State $\hat{I}\Leftarrow \hat{I} \clr (\hat{I}\leftarrow 1)$
\State $\hat{I}\Leftarrow \hat{I}\rightarrow 1$
\State \Return $\hat{I}$
\end{algorithmic}
\end{algorithm}

This procedure generates a $w_0$-sparse candidate for $\hat{I}$. We label it
$I_0$.
From this point on we iteratively begin, in step $i+1$, with a pair
$(I_i,w_i)$, where $I_i$ is a candidate for $\hat{I}$ satisfying
$w_i$-sparseness, and generate $(I_{i+1},w_{i+1})$ for the next iteration.
Algorithm~\ref{A:induction_step} describes how to do this with
$w_{i+1}=(w_i \leftarrow w_i)+1$, which is a growth rate that meets with the
requirements of the algorithm.

\begin{algorithm}
\caption{Creating $(I_{i+1},w_{i+1})$ from $(I_i,w_i)$}
\label{A:induction_step}
\begin{algorithmic}[1]
\State $R_i\Leftarrow \textit{RAND}(2^k)$
\State $I_\textit{begin} \Leftarrow I_i\clr (I_i \leftarrow w_i)$
\State $I_\textit{end} \Leftarrow (I_i\leftarrow w_i)\clr I_i$
\State $I_\textit{middle} \Leftarrow I_i \minus I_\textit{begin}$
\State $I_\textit{goodbegin} \Leftarrow \textit{EQ}(R_i \land \textit{MASK}(I_\textit{begin},w_i),0,I_\textit{begin},w_i)$
\State $I_\textit{goodend} \Leftarrow \textit{EQ}((R_i \leftarrow w_i) \land \textit{MASK}(I_\textit{end},w_i),\textit{MASK}(I_\textit{end},w_i),I_\textit{end},w_i)$
\State $I_\textit{goodmiddle} \Leftarrow \textit{EQ}(ADD((R_i \leftarrow w_i) \land \textit{MASK}(I_\textit{middle},w_i),I_\textit{middle},I_\textit{middle},w_i),R_i \land \textit{MASK}(I_\textit{middle},w_i),I_\textit{middle},w_i)$
\State $w_{i+1} \Leftarrow (w_i\leftarrow w_i)+1$
\State $I_{i+1} \Leftarrow ((\textit{MASK}(I_\textit{goodbegin}+I_\textit{goodmiddle},w_i)+I_\textit{goodbegin}) \land I_\textit{goodend}) \rightarrow (w_{i+1}\minus 1)$
\State \Return $(I_{i+1},w_{i+1})$
\end{algorithmic}
\end{algorithm}

Algorithm~\ref{A:induction_step} is the core algorithm, at the heart of this
paper's entire construction. We therefore examine and explain it line by line.

The idea behind Algorithm~\ref{A:induction_step} is to use the same tools we
developed in order to verify a tableau, only this time to use them in order to
verify the separation between ``$1$'' bits. Specifically, we begin by assuming
that every ``$1$'' bit in $I_i$ is followed by $w_i-1$ zero bits. This entire
integer can therefore be thought of as occurrences of the repeating pattern
``$(\mathsf{0}^{w_i-1}\mathsf{1})^+$'', separated
by zeroes. We want to measure the lengths of these repeating patterns. It is
straightforward to find where such a pattern begins, where it ends, and
which ``$1$'' bits are its middle bits. These are designated $I_\textit{begin}$,
$I_\textit{end}$ and $I_\textit{middle}$, respectively. The question is only
how to count the number of consecutive ``middle'' bits.

To do this, we treat $R_i$ as an Oracle string. Specifically: a counter.
We verify (in $I_\textit{goodbegin}$, $I_\textit{goodend}$ and
$I_\textit{goodmiddle}$, respectively) that the counter begins with a zero,
ends with $2^w-1$ and increments each time by one.
In the last step, the addition by $I_\textit{goodbegin}$ sends a
carry bit through the entire verified part of the vector. If it reaches
$I_\textit{goodend}$, this is an indication that the counting was correct.
We can therefore now take one bit ($I_\textit{goodend}$) from the sequence,
knowing that there must be at least $w_i \times 2^{w_i}$ zero bits preceding it.

We remark that because all we are interested in is to verify that the ``$1$''
bits are spaced far enough apart, it is not important to check, for example,
that the counter did not go through several entire revolutions, instead of just
one, between the beginning and the ending of each repetition. The omitted
checks are all for conditions which, if invalidated, merely extend the number
of zero bits that separate the ``$1$'' bits.

\subsection{Calculating $k$}\label{SS:k}

In Section~\ref{SS:mult}, $\hat{L}$ and $\hat{I}$ are calculated based on a
chosen $k$.
We now complete their construction by determining which value of $k$ to use
in the $i$'th iteration over the loop of Step~\ref{Step:infiloop} of
Algorithm~\ref{A:R_sim_TM}.
A good value would be one for which the correct tableau, describing the run of
$\mathcal{T}$ on a tape of size $s$ for its first $B$ steps,
is an element of $(\hat{L},\hat{I}, \hat{w} )$ with probability of $1-p_i$,
where $p_i<2^{-i-1}$, and therefore $\sum_i p_i\le 1/2$.

Let $m$ be the number of iterations required by Algorithm~\ref{A:induction_step}
before reaching $w_m\ge \hat{w}$. (The value of $m$ is
$\textit{maxstep}+\BigO(1)$.)

Consider a specific bit-position of $\hat{I}$, $b$, neither close to the LSB
nor to the $k$'th bit position. For $\hat{I}$'s value in
such a bit position to be $1$, each of $R_0$ through $R_m$
should have exactly
a prescribed set of values in each of $\BigO(w_m)$ bit
positions surrounding
$b$. (If there is more than one such possibility, pick one arbitrarily.)
Given that the value of $\hat{I}$ is $1$ in this bit position, we require that
the value of $\hat{L}$ in a further $ \hat{w} $ bit positions be exactly the correct
tableau.
Altogether, we need
$Z=\BigO(w_m\times m)$ bits to be
randomly chosen to exactly the appropriate values.

Let us divide the set of bit positions of $\hat{I}$ into segments of size
$W=\BigO(w_m)$,
chosen such that there will not be any overlap between the bit positions in
$R_0,\ldots, R_m$ required for the least bit inside each segment to be $1$ in
$\hat{I}$. The probability that
this bit position of $\hat{I}$ is 1, and the associated element in $\hat{L}$ is
a correct tableau is $2^{-Z}$. Suppose we pick
$k$ to be $W\times (i+1)\times 2^Z$. This ensures that there are
$(i+1)\times 2^Z$ concurrent attempts, each of which has a
$2^{-Z}$
independent probability of providing a correct tableau. Altogether, the
probability that no such tableau exists (and the simulation will therefore
fail) is on the order of
$p_i\approx \text{e}^{-i-1}<2^{-i-1}$, as desired.

In practice, we cannot use this value for $k$ because we cannot compute it
for lack of a multiplication operation. However, we can compute values that
are guaranteed to be no smaller than it, for example by switching every
$a\times b$ in the calculation to $a\leftarrow b$. A larger value for $k$
results in a smaller probability of false acceptance.

Thus, the program meets the RP-RAM acceptance criteria.

\subsection{Completing the proof}\label{SS:complete}

\begin{proof}[Proof of Lemma~\ref{L:BRPRAM}]
The techniques developed thus far provide most of the proof of the lemma:
Algorithm~\ref{A:R_sim_TM} provides the top level of a BRP simulator for PEL,
the details of which are provided in the sections following it.

To complete the proof, we now discuss the reverse direction:
\begin{equation}\label{Eq:BRP_in_PEL}
\text{BRP-RAM}\subseteq\text{PEL}.
\end{equation}

To show this,
we first note that, just like PEL can equally be thought of as PEL-TIME or as
PEL-SPACE, so can it be thought of as PEL-NSPACE. This is a direct result of
Savitch's Theorem \cite{Savitch:nspace}. This indicates that for PEL we can
replace the deterministic TM with a nondeterministic one, without this
affecting its computational power. The power of the class of randomised TMs is
clearly sandwiched between the deterministic and the nondeterministic classes,
so it must equal both. For this reason, in demonstrating
Equation~\eqref{Eq:BRP_in_PEL}, it is enough that we show that a PEL-SPACE
randomised TM can simulate a BRP-RAM.

In order to accomplish this simulation, we retain on the TM's tape the status
of the RAM's registers encoded as address-value pairs for the registers whose
value is nonzero. The number of these is linear and the size of each cannot
exceed ${}^{\BigO(\textit{step})} \max(2,\textit{inp})$ on the $\textit{step}$'th
execution step. The total space required,
including a scratchpad area to perform the actual calculations, is therefore
no more than PEL-SPACE, as necessary.

The reason the simulation was performed
using a randomised TM is in order to simulate
\[
X\Leftarrow \textit{RAND}(2^k)
\]
instructions, which are handled by writing $k$ of the TM's random bits into
$X$.
\end{proof}

\section{RP vs.\ BRP}\label{S:RP_BRP}

The discussion above pertained to a RAM model that incorporates an
$X\Leftarrow \textit{RAND}(2^k)$ pseudofunction. This still leaves the question
of whether the more general $X\Leftarrow \textit{RAND}(Y)$ is perhaps more
powerful. We complete the proof of Theorem~\ref{T:RPRAM} by showing that this
is not the case.

\begin{proof}[Proof of Lemma~\ref{L:generic_RAND}]
We construct a BRP-RAM simulation of an RP-RAM.

First, due to Lemma~\ref{L:BRPRAM}, we know that we can assume the existence
of ``$\div$'', ``$\times$'' and ``$\minus$'', from which we can further assume
the existence of a modulo operation.

Second, we can reuse a technique that was showcased in
Algorithm~\ref{A:R_sim_TM}, wherein simulation to an unknown number of steps
is performed by a sequence of simulations to a bounded and exponentially
increasing number of steps (denoted by ``$\textit{maxstep}$'') without this
affecting run-time complexities. This method allows to convert a finite-time
simulation of an RP-RAM on a BRP-RAM to a general simulation. We can, therefore,
limit ourselves to constructing a simulation bounded by $\textit{maxstep}$
steps.

During $\textit{maxstep}$ steps, a randomised RAM can invoke
$\textit{RAND}(Y_i)$ at most $\textit{maxstep}$ times, and the $Y_i$ parameter
used in each invocation is at most
$M={}^{\BigO(\textit{maxstep})}\max(2,\textit{inp})$.

The first step in the proof is to collate all calls to $\textit{RAND}(Y_i)$ in
the $\textit{maxstep}$-step
simulation into a single call. If we were able to know in advance the $Y_i$
parameter used in each call, it would have been possible to make all calls into
a single call whose parameter is the product of all $Y$ parameters used.
The individual random results can then be separated by applications of
``$\div$'' and ``$\modop$''.

In fact, we do not know the parameters in advance, but can bound their product,
$Y=\prod Y_i$,
by $M^\textit{maxstep}$. Clearly, it is possible in polynomial time to generate
a value, $2^{\tilde{k}}$, that exceeds this limit.

Let us now continue with the result from the call to
$\textit{RAND}(2^{\tilde{k}})$ as
though it was a call to $\textit{RAND}(Y)$. The extraction process of the
individual $X_i$ from $X$ depends only on $X \modop Y$. As such,
the termination probabilities that it affords are the same as those
of the RP-RAM if we condition over $X<2^{\tilde{k}}-(2^{\tilde{k}} \modop Y)$.
For higher $X$ values, the BRP-RAM will not accept the input if it is not in the
language, and will accept the input with some probability if it is in the
language.

It is not difficult to see that the worst-case for the total probability
of acceptance for an input that is within the language is
$p/(2-p)$ for the BRP-RAM, if this probability is $p$ for the
RP-RAM.\footnote{The total RP-RAM algorithm
must succeed with probability $0.5$ or more, but this does not mean that the
same is true for each one of the bounded-step simulations individually. This is
the reason why a general parameter, $p$, is required.}
This worst-case is attained when $2^{\tilde{k}}$ is approximately $(2-p) Y$, and the
accepting computations are those that work with $X\ge Y (1-p)$.

To meet with the acceptance criteria of the BRP-RAM model, we simply run the
simulation three times for each $\textit{maxstep}$ value. If in each of three
independent runs the
probability of acceptance is at least $p/(2-p)$, the probability of acceptance
in at least one of the three is at least
\[
1-\left( 1- \frac{p}{2-p} \right)^3.
\]

A little calculus shows that this is never less than $p$ for the entire range
$0\le p \le 1$, so the success rate of the simulating BRP-RAM is always at least
as good as that of the simulated RP-RAM.

On the reverse direction, an RP-RAM can simulate a BRP-RAM trivially, because
it has ``$\leftarrow$'' as a basic operation.
\end{proof}

\section{Additional results}\label{S:input}

\setcounter{thm}{1}
\setcounter{cor}{2}

We introduce two corollaries to Theorem~\ref{T:RPRAM}. First,
Theorem~\ref{T:RPRAM} relates only to polynomial-time execution. We sharpen
this result by proving
\begin{cor}\label{C:fTIME}
For any function $f(n)$, a
$\text{RAM}[+,[\minus],[\times],[\div],\leftarrow,[\rightarrow],\textit{Bool\/}]$
with access to a $\textit{RAND}()$ instruction,
working in $\Theta(f(n))$ time, where $n$ is the bit-length of the input,
can accept a language $S$ in the RP sense
(all inputs not in the language are rejected, all inputs in the language are
accepted with probability at least $1/2$) if and only if the language can be
accepted by a Turing machine working in ${}^{\Theta(f(n))}\max(2,n)$ time.
\end{cor}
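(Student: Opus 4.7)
The plan is to extend the two inclusions underlying Theorem~\ref{T:RPRAM} from ``polynomial'' to ``$\Theta(f(n))$'', reusing Algorithm~\ref{A:R_sim_TM}, the RP-to-BRP reduction of Lemma~\ref{L:generic_RAND}, and the BRP-RAM-to-TM simulation in the proof of Lemma~\ref{L:BRPRAM} as drop-in modules and tracking their constant-factor overheads explicitly.

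For the \emph{if} direction (TM in time ${}^{\Theta(f(n))}\max(2,n)$ yields RAM in $\Theta(f(n))$), I would rerun Algorithm~\ref{A:R_sim_TM} essentially unchanged. Since the hypothesised TM's tape usage is bounded by its running time, the doubling loop terminates as soon as ${}^{\textit{maxstep}}2 \geq T(n)$, i.e.\ at $\textit{maxstep} = \Theta(f(n))$. The running-time analysis stated immediately after Algorithm~\ref{A:R_sim_TM} then gives total work $\Theta(m + n\log m) = \Theta(f(n) + n \log f(n))$. For the residual regime $f(n) = \BigO(1)$, I would bypass the $\Theta(n)$-cost Algorithm~\ref{A:mult} by appealing to the constant-time recognition of elementary-time TM languages from \cite{Brand:ALNs}.

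For the \emph{only if} direction (RP-RAM in $\Theta(f(n))$ yields TM in ${}^{\Theta(f(n))}\max(2,n)$), I would first apply the RP-to-BRP reduction from Lemma~\ref{L:generic_RAND}; inspection shows its overhead is a constant factor (collapsing all $\textit{RAND}(Y_i)$ calls in one $\textit{maxstep}$-segment into a single $\textit{RAND}(2^{\tilde{k}})$ call and tripling the number of runs), so the simulating BRP-RAM still runs in $\Theta(f(n))$. I would then replay the randomised-TM simulation from the reverse direction of Lemma~\ref{L:BRPRAM}: after $\Theta(f(n))$ RAM steps, the list of nonzero address--value pairs on the tape has total length at most ${}^{\BigO(f(n))}\max(2,n)$. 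Savitch's theorem then collapses this randomised space to deterministic space of the same tetration class (the quadratic blow-up is absorbed by the tower), and the textbook $\text{DSPACE}(s) \subseteq \text{DTIME}(2^s)$ inclusion finishes the direction with deterministic time ${}^{\Theta(f(n))}\max(2,n)$.

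The hard part will be the tight matching of $\Theta(f(n))$ in the \emph{if} direction for slowly-growing $f$: the additive $n \log m$ term coming from Algorithm~\ref{A:mult} and the base case handled via \cite{Brand:ALNs} both have to be shown to stay inside $\Theta(f(n))$, which needs a careful partition of $f$ into the super-logarithmic and constant regimes (with an argument that a TM accepting a nontrivial language in time $T(n)$ necessarily satisfies $T(n) \geq n$, pinning $f(n) \geq 1$). Once that piece of book-keeping is in place, everything else is the same calculation as in the polynomial case already carried out in Sections~\ref{S:BRP_PEL} and \ref{S:RP_BRP}.
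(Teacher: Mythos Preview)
Your ``only if'' direction matches the paper's. The ``if'' direction, however, has a genuine gap: your dichotomy into ``super-logarithmic'' and ``constant'' regimes leaves the entire range $f(n)=\omega(1)$, $f(n)=o(n\log n)$ uncovered (e.g.\ $f(n)=\log n$, $\sqrt{n}$, or even $n$), and throughout that range the $n\log f(n)$ term coming from Algorithm~\ref{A:mult} strictly dominates $f(n)$, so the simulation does not run in $\Theta(f(n))$ and neither of your two fixes applies. Even for $f(n)=\BigO(1)$ your appeal to \cite{Brand:ALNs} is unavailable: that constant-time result uses the full arithmetic instruction set including division, whereas the corollary must hold already for the minimal set $\{+,\leftarrow,\textit{Bool\/}\}$, where the deterministic model reaches only PSPACE.

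The paper closes this gap with a new construction (Algorithm~\ref{A:input}) rather than a case split. Instead of computing $\textit{IN}=\textit{inp}\times\hat{I}$ in $\Theta(n)$ time to \emph{verify} $T[0]$, it arranges for every candidate tableau to be \emph{born} with the correct $T[0]$: after two extra iterations of Algorithm~\ref{A:induction_step} it manufactures, in constant time, a map $L_\textit{const}$ of repeated ``$1$''s and then a counter map $L_\textit{counter}$ whose elements enumerate \emph{all} bit strings of a given length, each certified by the same carry-propagation trick used in Algorithm~\ref{A:induction_step}. Reinterpreting $(L_\textit{counter},L_\textit{const},\textit{elementwidth})$ as a map, each element's value is determined purely by its bit offset from the preceding index bit, so a single mask built from $\textit{inp}$ via ``$\leftarrow$'' (not ``$\times$'') selects exactly those candidates whose top $w$ bits equal $\textit{inp}$; with order-reversed tableaux this fixes $T[0]$ for free and the remaining verification is $\BigO(1)$. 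The paper also replaces Step~\ref{Step:s} of Algorithm~\ref{A:R_sim_TM} by $s\Leftarrow{}^{\textit{maxstep}}\max(2,n)$, without which your claim that $\textit{maxstep}=\Theta(f(n))$ already fails for constant~$f$.
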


Second, we consider the more relaxed acceptance criteria (and therefore, the
ostensibly stronger computational model) of BPP. Here, we ask only that all
inputs not in the language be rejected with probability at least $2/3$ and that
all inputs in the language will be accepted with probability at least $2/3$.
We claim:
\begin{cor}\label{C:BPP}
For any function $f(n)$, a
$\text{RAM}[+,[\minus],[\times],[\div],\leftarrow,[\rightarrow],\textit{Bool\/}]$
with access to a $\textit{RAND}()$ instruction,
working in $\Theta(f(n))$ time, where $n$ is the bit-length of the input,
can accept a language $S$ in the BPP sense
if and only if the language can be
accepted by a Turing machine working in ${}^{\Theta(f(n))}\max(2,n)$ time.

In particular,
\[
\text{BPP-RAM}[+,[\minus],[\times],[\div],\leftarrow,[\rightarrow],\textit{Bool\/}]=\text{PEL},
\]
where $\text{BPP-RAM}$ denotes a RAM working in polynomial time, using BPP
acceptance criteria.
\end{cor}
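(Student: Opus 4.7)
The plan is to adapt the machinery developed for Theorem~\ref{T:RPRAM} and Corollary~\ref{C:fTIME}, handling the two directions asymmetrically.

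For the ``if'' direction I would simply observe that RP-acceptance is a special case of BPP-acceptance, so Corollary~\ref{C:fTIME} immediately gives that any language accepted by a TM in time ${}^{\Theta(f(n))}\max(2,n)$ is already accepted by an RP-RAM in time $\Theta(f(n))$, and hence also by a BPP-RAM in the same time.

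For the ``only if'' direction I would extend the simulation in the proof of Lemma~\ref{L:BRPRAM} that established $\text{BRP-RAM}\subseteq\text{PEL}$. The key observation is that a simulating TM with PEL-SPACE resources can do more than certify the existence of an accepting random seed: it can enumerate over \emph{all} random seeds and count the exact acceptance probability. Concretely, I would first reduce to the binary $\textit{RAND}$ case via the techniques of the proof of Lemma~\ref{L:generic_RAND}; this incurs only a constant-factor overhead in time and shifts the acceptance probability by an amount that is harmless under BPP amplification. The total number of random bits consumed by the resulting BRP-type machine in $\Theta(f(n))$ steps is then at most ${}^{\BigO(f(n))}\max(2,n)$, because each call draws at most $\log M$ bits, with $M={}^{\BigO(f(n))}\max(2,n)$ bounding register values. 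The simulating TM enumerates over all such seeds, simulates the (now deterministic) RAM for each, tallies accepting runs in a big-integer counter, and compares against the $2/3$ threshold. Counter, enumeration pointer and simulation scratchpad all fit in PEL-SPACE, and the deterministic TM therefore runs in time at most ${}^{\Theta(f(n))}\max(2,n)$ via the standard space-to-time inclusion, which adds only a single level of tetration that is absorbed by the $\Theta(\cdot)$ in the exponent. Specialising $f(n)$ to any polynomial then recovers the ``in particular'' claim $\text{BPP-RAM}=\text{PEL}$.

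I expect the main subtlety to be in the reduction from $\textit{RAND}(Y)$ to $\textit{RAND}(2^k)$. The argument in Lemma~\ref{L:generic_RAND} was engineered specifically for one-sided error, with a three-fold repetition that cancelled the conditional-probability bias introduced by $2^{\tilde{k}}$ not being a multiple of $Y$. In the BPP setting both error probabilities are bounded away from $1/2$, so the analogous step should go through via standard majority-vote amplification on a polynomial number of independent simulations; verifying that the rounding distortion is compatible with this amplification, and that the amplification overhead is still within $\Theta(f(n))$ RAM steps, is the one place where careful bookkeeping is unavoidable. Everything else is a direct transcription of the $\text{BRP-RAM}\subseteq\text{PEL}$ half of the proof of Lemma~\ref{L:BRPRAM}, with ``does there exist an accepting seed?'' replaced by ``do at least two thirds of the seeds accept?''.
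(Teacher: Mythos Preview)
Your proposal is correct and tracks the paper's proof closely. The ``if'' direction is identical (RP-acceptance implies BPP-acceptance, so Corollary~\ref{C:fTIME} carries over). For the ``only if'' direction you both reduce $\textit{RAND}(Y)$ to $\textit{RAND}(2^k)$ on the RAM side and then simulate by a TM; the one substantive difference is that the paper keeps the randomised-TM intermediate from Lemma~\ref{L:BRPRAM} verbatim and relies implicitly on randomised PEL-SPACE collapsing to deterministic PEL-SPACE, whereas you make that collapse explicit by having the TM enumerate all seeds and count accepting runs. Your version is arguably more self-contained on this point (the sandwich argument used for Lemma~\ref{L:BRPRAM} was phrased for one-sided error), while the paper's is shorter because it reuses the earlier lemma wholesale.

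On the $\textit{RAND}(Y)\to\textit{RAND}(2^k)$ step the paper is more concrete than your sketch: it takes $\tilde{k}$ larger by~$2$ than in Lemma~\ref{L:generic_RAND}, so the worst-case distorted error becomes $4p/(5-p)$ rather than $p/(2-p)$, and then three repetitions with a majority vote suffice to bring both error probabilities back below the originals. Your ``polynomial number of independent simulations'' should be tightened to a constant number---that is what keeps the simulation within $\Theta(f(n))$ RAM time, and it is exactly the bookkeeping you flag as unavoidable. As an aside, once you commit to full seed enumeration on the TM side you could bypass the RAND reduction entirely, enumerating the at most $M^{f(n)}$ sequences of $\textit{RAND}(Y_i)$ outcomes directly and comparing the exact accept count to the threshold; this would remove the distortion issue altogether.
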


\begin{proof}[Proof of Corollary~\ref{C:fTIME}]

Most parts of the construction used in the proof of
Theorem~\ref{T:RPRAM} can be used as-is in the present proof. By replacing
arbitrary polynomial bounds by concrete $f(n)$ bounds, we conclude that
a RAM machine working in $\Theta(f(n))$ time can be simulated by a Turing
machine working in ${}^{\BigO(f(n))}\max(2,n)$ time, as desired.\footnote{In the
construction of Section~\ref{SS:complete}, we use
${}^{\BigO(f(n))}\max(2,\textit{inp})$, rather than ${}^{\BigO(f(n))}\max(2,n)$, but
it is not difficult to ascertain that the two are the same. Let us take, for
simplicity, $\textit{inp}$ to be $2^n$, with $n\ge 2$. We have
${}^X \textit{inp} = {}^X (2^n) \le {}^X (n^n) \le {}^{2X} n$, which is
${}^{\Th(X)} n$, as required.}
In the opposite direction,
a Turing machine working in ${}^{\Theta(f(n))}\max(2,n)$ time can be simulated
by a RAM in $\BigO(f(n)+n \log f(n))$ time. For this, the only alteration required
in Algorithm~\ref{A:R_sim_TM} is for Step~\ref{Step:s} to be changed to
\[
s \Leftarrow {}^{\textit{maxstep}}\max(2,n).
\]

For $f(n)$ functions even as slow-growing as $n^2$, the $f(n)$ factor is the
dominant part of the complexity, meeting the conditions of the lemma. However,
when $f(n)$ is small (for example, if it were a constant), the $n \log f(n)$
becomes the dominant factor. We show that this factor can be eliminated.

The extraneous $n \log f(n)$ factor comes from
Algorithm~\ref{A:mult}, which requires $\BigO(n)$ time. We
remind that the purpose of Algorithm~\ref{A:mult} was to perform the
multiplication $\textit{IN}=\hat{I}\times\textit{inp}$. In verifying the correctness
of an accepting tableau, one needs to ascertain that it contains the correct
initialisation, a correct progression from state to state, and an accepting
final state. Without $\textit{IN}$, we are able to verify the correct
progression and the properties of the final state, but we cannot, given the
means described so far, ascertain that its initialisation matches the given
input.

In a correct initialisation we have
\begin{align*}
T[0] &= \textit{inp} \\
H[0] &= 1 \\
S[0] &= 0 \\
I[0] &= 1.
\end{align*}
The construction in the proof of Lemma~\ref{L:BRPRAM} already provides means to
verify the correctness of $H[0]$, $S[0]$ and $I[0]$ in constant time. The
difficulty is in verifying $T[0]$, for which the method of Lemma~\ref{L:BRPRAM}
requires $\textit{IN}=\hat{I}\times\textit{inp}$.

Where the present construction differs from the original construction is that
instead of \emph{verifying} $T[0]$ for a
randomly generated tableau, we actively set $T[0]$ to its desired
value. We will do this simultaneously in all tableau candidates in the entire
$(\hat{L},\hat{I}, \hat{w} )$ map.

We begin by generalising somewhat the notion of a tableau.
Specifically, we will use
tableaux that are element-wise reversed. The last element in the tableau will
signify the initial state of the TM, the fore-last element will be the
instantaneous description of the TM after a single step, and so on. Reviewing
the proof of Theorem~\ref{T:RPRAM}, it is not difficult to
ascertain that the same results derived for the original tableaux are equally
relevant for order-reversed tableaux.

Under this new definition of tableau, what we are trying to create is a map,
$(\hat{L},\hat{I}, \hat{w} )$, where the topmost $w=s+c-1$ bits of each element
are set to the value $\textit{inp}$, and which meets, in all other respects,
the criteria we previously required of such a map: a number exponential in
$\hat{w}$ of independent, uniformly distributed map elements.

Consider now that the tools which we have already developed
suffice in order to simulate a TM working without input. In this case,
$\textit{inp}=0$, $\textit{inp}\times \hat{I}=0$.

Let us therefore, for now, ignore the original TM that
we intend to simulate and ignore its input. Instead, let us assume that
we have at our disposal a pair $(I_x,w_x)$, where $w_x$ is not equal to
$w_m$, the last $w_i$ value computed by Algorithm~\ref{A:induction_step}, but
rather larger than $(2^{w_m})\leftarrow (2^{w_m})$. (Such a pair
can be found simply by running Algorithm~\ref{A:induction_step} two additional
iterations.) Consider, now, Algorithm~\ref{A:input}, where $w$ is
the width of the tableau vectors to be verified (the bit length of each of their
elements), and $\hat{w}$ is the width of the map that can store it as an
element.

\begin{algorithm}
\caption{Verifying the input}
\label{A:input}
\begin{algorithmic}[1]
\Function{InputVerify}{$I_x,w_x,\hat{w},\textit{inp},w$}
\State $\textit{elementwidth}\Leftarrow 1\leftarrow \hat{w}$\label{Step:ew}
\State $\textit{width}\Leftarrow \textit{elementwidth}\leftarrow \textit{elementwidth}$
\State $L_{\textit{const}}\Leftarrow \textit{RAND}(2^k)\land \textit{MASK}(I_x,\textit{width})$ \label{Step:L_const}
\State $I_{\textit{goodbegin}}\Leftarrow \textit{EQ}(L_{\textit{const}}\land\textit{MASK}(I_x,\textit{elementwidth}),I_x,I_x,\textit{elementwidth})$ \label{Step:const_begin}
\State $I_{\textit{goodtransition}}\Leftarrow (\textit{MASK}(I_{\textit{goodbegin}},\textit{width}) \clr (L_{\textit{const}} \oplus (L_{\textit{const}} \leftarrow \textit{elementwidth}))) \lor \textit{MASK}(I_\textit{goodbegin},\textit{elementwidth})$ \label{Step:const_trans}
\State $I_{\textit{good}}\Leftarrow I_\textit{goodbegin} \land ((I_\textit{goodtransition}+I_\textit{goodbegin}) \rightarrow \textit{width})$ \label{Step:const_good}
\State $L_{\textit{counter}}\Leftarrow \textit{RAND}(2^k)\land\textit{MASK}(I_\textit{good},\textit{width})$ \label{Step:L_counter}
\State $I_{\textit{goodbegin}}\Leftarrow I_\textit{good} \land \textit{EQ}(L_{\textit{counter}}\land\textit{MASK}(I_x,\textit{elementwidth}),0,I_x,\textit{elementwidth})$
\State $\textit{temp} \Leftarrow \textit{ADD}(L_{\textit{counter}}, L_{\textit{const}}, L_{\textit{const}}, \textit{elementwidth}) \leftarrow \textit{elementwidth}$
\State $\textit{gbmask} \Leftarrow \textit{MASK}(I_\textit{goodbegin}, \textit{elementwidth})$
\State $I_{\textit{goodtransition}}\Leftarrow (\textit{MASK}(I_{\textit{goodbegin}}, \textit{width}) \clr (L_{\textit{counter}} \oplus \textit{temp})) \lor \textit{gbmask}$
\State $I_{\textit{good}}\Leftarrow I_\textit{goodbegin} \land ((I_\textit{goodtransition}+I_\textit{goodbegin}) \rightarrow \textit{width})$ \label{Step:counter_good}
\State $M \Leftarrow \textit{MASK}(I_\textit{good} \leftarrow (\textit{inp}\leftarrow (\hat{w}+\hat{w}\minus w)), \textit{elementwidth}\leftarrow (\hat{w}\minus w))$ \label{Step:mask}
\State $L_\textit{output}\Leftarrow L_\textit{counter} \land M$
\State $I_\textit{output}\Leftarrow L_\textit{const} \land M$
\State \Return $(L_\textit{output},I_\textit{output},\hat{w})$\label{Step:ret}
\EndFunction
\end{algorithmic}
\end{algorithm}

Let us analyse this algorithm line by line.

Steps~\ref{Step:L_const} through
\ref{Step:const_good} generate a map
$(L_\textit{const},I_\textit{good},\textit{width})$ that has the bit-string
\[
\left(\mathsf{0}^{\textit{elementwidth}-1}\mathsf{1}\right)^{2^\textit{elementwidth}}
\]
as each one of its elements.
The program performs this by considering all elements in the map
$(L_\textit{const},I_x,\textit{width})$ and filtering out first those indices
whose elements do not begin with the substring
\[
\mathsf{0}^{(\textit{elementwidth}-1)}\mathsf{1}
\]
(Step~\ref{Step:const_begin}), and then those elements which are not composed
entirely of repetitions of a constant string of length $\textit{elementwidth}$.
The latter is tested in Step~\ref{Step:const_trans} by verifying equality
between each substring of length $\textit{elementwidth}$ of $L_\textit{const}$
and the substring of the same length following it immediately. As was done in
Algorithm~\ref{A:induction_step}, an addition operation, carried out on
Step~\ref{Step:const_good}, propagates a carry bit through every element.
The good elements, remaining in the final index set, $I_\textit{good}$, are
those for which the carry propagated through the entire element, thereby
verifying that all the element's bits are correct.

A similar technique, used in Steps~\ref{Step:L_counter} through
\ref{Step:counter_good}, filters $I_\textit{good}$ even further, until it is
known additionally that every element of the map
$(L_\textit{counter},I_\textit{good},\textit{width})$
is a counter of width
$\textit{elementwidth}$. That is to say, its first $\textit{elementwidth}$
bits are all zero, its next $\textit{elementwidth}$ bits are the binary
representation of the number $1$, and so on, in arithmetic progression,
until the last element, being $2^\textit{elementwidth}-1$. This second phase
of filtering on $I_\textit{good}$ is, once again, performed by verifying first
the lowest $\textit{elementwidth}$ bits (which, in this case, must equal $0$)
and then the relation between each element and the next (which is here
incrementation). Simultaneous incrementation of all substrings of length
$\textit{elementwidth}$ is done by adding $L_\textit{const}$ to
$L_\textit{counter}$.

In addition to the two maps generated,
$(L_\textit{const},I_\textit{good},\textit{width})$ and
$(L_\textit{counter},I_\textit{good},\textit{width})$,
consider, now, the following new map:
$(L_\textit{counter},L_\textit{const},\textit{elementwidth})$. If this map
has any elements at all, then it has every possible element of bit-length
$\textit{elementwidth}$. In particular, it would have our desired tableau.

However, while this procedure has so far presented an alternate method for
producing tableau candidates, it still has not addressed the main problem of
verifying that the candidate tableaux begin with the correct bit-string,
$\textit{inp}$.

The method by which this entire construction can now overcome the problem of
verifying $\textit{inp}$ is by noting that in the new structure the value of
each tableau candidate is determined completely by its bit-position relative to
the ``$1$'' bit of $I_\textit{good}$ immediately preceding it. Specifically,
the mask $M$, built in Step~\ref{Step:mask}, is able to mask out all
candidates whose most significant bits do not match the desired value.

Thus, the algorithm generates exactly that subset of the possible tableaux that
have the correct $T[0]$.

Two remarks regarding this algorithm.
\begin{enumerate}
\item In Step~\ref{Step:ew} of the algorithm we define the width of the map
the algorithm constructs. Ostensibly, only a map of width $\hat{w}$ is
required, and there is no need to define $\textit{elementwidth}$ to be any
higher. While this is true for most of Algorithm~\ref{A:input}, we do need
$\textit{elementwidth}$ to be a power of $2$ for step~\ref{Step:mask} to
work properly. The expression
\[
\textit{inp}\leftarrow(\hat{w}+\hat{w}\minus w)
\]
which appears in it is really a rewriting of
\[
(\textit{inp}\times\textit{elementwidth})\leftarrow (\hat{w}\minus w),
\]
which is necessary because ``$\times$'' is not assumed to be available. This
rewrite requires using a known power of $2$ for $\textit{elementwidth}$.
\item The algorithm actually builds the map
$(L_\textit{output},I_\textit{output},\textit{elementwidth})$ with desired
values. However, it returns only
$(L_\textit{output},I_\textit{output},\hat{w})$, truncating the size of its
elements to only $\hat{w}$ bits, as the final output should be. The reason this
can be done is that in all elements, in all bits higher
than $\hat{w}$, the bit values are set to zero by the algorithm. This is the
``desired value'' for these bit positions.
\end{enumerate}

The new algorithm differs from the original one in its error (false rejection)
probabilities. We complete the proof, therefore, by verifying that the new
error probabilities for iteration $i$ of the loop in Step~\ref{Step:infiloop} of
Algorithm~\ref{A:R_sim_TM}, which we denote $p_i$, can still be made to
satisfy $\sum_i p_i \le 1/2$.

Let us bound $p_i$ from above. A false reject occurs in the new algorithm only
in one situation: when $I_\textit{good}=0$ in Step~\ref{Step:counter_good} of
Algorithm~\ref{A:input}. In all other cases, every possible tableau with the
correct initialisation is generated and tested.

The remainder of the argument is the same as in the original proof: for a bit
of $I_\textit{good}$ to be 1, a total of $Z=\BigO(w_x\times\textit{maxstep})$ bits
in a total of $W=\BigO(w_x)$ consecutive bit positions in $\BigO(\textit{maxstep})$
randomly chosen integers are required to attain specific bit values. That
being the case, a choice of $k$ as $W\times(i+1)\times 2^Z$ will ensure
$p_i\approx e^{-(i+1)}$, leading to $\sum_i p_i<1/2$, as desired. Choosing
a larger $k$, so as to avoid the need
for multiplication, only lowers the error probability further.
\end{proof}

\begin{proof}[Proof of Corollary~\ref{C:BPP}]

To extend Corollary~\ref{C:fTIME} from RP acceptance criteria to BPP acceptance
criteria we first note that any RP problem is by definition also a BPP
problem. (Running an RP algorithm twice results in a 0 false acceptance rate
and 1/4 false rejection rate, both of which are better than what is required
by BPP.) We therefore only need to prove that a TM can simulate a RAM under the
appropriate time constraints. Doing so is essentially done as in the proof
of Theorem~\ref{T:RPRAM}. The part of the proof corresponding to
Lemma~\ref{L:BRPRAM} remains unchanged: we use the same simulation of a
randomised RAM by a randomised TM. The only difference is that we use, for both
the RAM and the TM, BPP acceptance criteria, rather than RP ones.

For the equivalent of Lemma~\ref{L:generic_RAND}, showing that a generic random
function is not more powerful than $\textit{RAND}(2^k)$, we use a slightly
different construction.

In the original construction, we picked as $2^{\tilde{k}}$ a value in excess of
$M^{\textit{maxstep}}$. This time we will pick a ${\tilde{k}}$ larger by $2$. Whereas the
original choice of ${\tilde{k}}$ ensured that instead of a false reject probability of
$p_i$ the restricted-$\textit{RAND}$ algorithm will have a false reject of
$p_i/(2-p_i)$, the new choice of ${\tilde{k}}$ now ensures $4p_i/(5-p_i)$.

A little arithmetic now shows that for error probabilities lower than $1/3$
simply running the algorithm $3$ times and taking a majority vote attains
acceptance and rejection error probabilities that are better than the original,
and therefore certainly, over the entire algorithm, within the parameters of
BPP.
\end{proof}

\section{Conclusions}\label{S:conclusions}

This work introduced the new complexity class, PEL, and showed that
PEL arises naturally in both deterministic and randomised PTIME RAM
computations. The power of both the RP-RAM and the BPP-RAM with several
interesting operation sets
was characterised as PEL, this characterisation of the RP-RAM closing a $30$
year old open question.

However, perhaps the most important point of this paper is in pointing out that
$\text{P-RAM}=\text{RP-RAM}$ for some RAMs (specifically, those whose basic
operation sets include division), whereas for others this is not the case.

Although these conclusions seem in no way applicable to the central question of
P vs.\ RP in TMs, it still sheds interesting light on this problem, in pointing
out that the answer of whether stochasticity adds computational power under
RP criteria does not have a single universal answer. Rather, it is
dependent on the details of the computational model examined.

\bibliographystyle{plain}
\bibliography{bibRP_P.bib}

\end{document}